\documentclass[12pt]{article}

\usepackage{graphicx}
\usepackage{amsfonts, amsmath, amsthm, amssymb, mathrsfs}
\usepackage{url}
\usepackage{bm}
\usepackage{setspace}
\usepackage{float}
\usepackage{threeparttable}
\usepackage{multirow}
\usepackage{array}
\usepackage{xcolor}
\usepackage{colortbl}

\usepackage[top=1in, bottom=1in, left=1in, right=1in]{geometry}
\makeatletter
\renewcommand\float@endH{\@endfloatbox\vskip\intextsep
 \if@flstyle\setbox\@currbox\float@makebox\columnwidth\fi
 \box\@currbox\vskip\intextsep\relax\@doendpe}
\makeatother

\usepackage{caption}
\theoremstyle{plain}
\newtheorem{dfn}{Definition}
\newtheorem{prop}{Proposition}
\newtheorem{lem}{Lemma}
\newtheorem{thm}{Theorem}
\newtheorem{cor}{Corollary}
\newtheorem{rem}{Remark}
\newtheorem{ass}{Assumption}

\usepackage[natbib=true, backend=biber, citestyle=authoryear-comp, maxcitenames=2, bibstyle=numeric, giveninits=true, sorting=nyt, uniquename=false, uniquelist=false, 
doi=false,isbn=false,url=false]{biblatex}
\DeclareFieldFormat{pages}{#1}
\DeclareFieldFormat*{title}{#1}
\DeclareFieldFormat*{number}{(#1)}
\DeclareFieldFormat[article]{journaltitle}{\emph{#1}}
\AtBeginBibliography{}
\AtBeginBibliography{\setcounter{maxnames}{99}}
\AtBeginBibliography{\setcounter{minnames}{99}}
\DeclareNameAlias{editor}{author}
\DeclareNameAlias{default}{family-given}

\DeclareBibliographyDriver{article}{
 \printnames{author}
 \setunit{\addspace}
 \printtext{\mkbibparens{\printfield{year}\printfield{extradate}}}
 \setunit{\addperiod\space}
 \printfield{title}
 \setunit{\addcomma\space}
 \printfield{journaltitle}
 \setunit{\addcomma\space}
 \printfield{volume}
 \printfield{number}
 \setunit{\addcomma\space}
 \printfield{pages}
 \finentry
}

\DeclareBibliographyDriver{inproceedings}{
 \printnames{author}
 \setunit{\addspace}
 \printfield[parens]{year}
 \setunit{\addperiod\space}
 \printfield{title}
 \setunit{\addcomma\space}
 \printtext{in}
 \printfield{booktitle}
 \setunit{\addcomma\space}
 \printlist{organization}
 \setunit{\addcomma\space}
 \printfield{pages}
 \finentry
}

\DeclareBibliographyDriver{inbook}{
 \printnames{author}
 \setunit{\addspace}
 \printfield[parens]{year}
 \setunit{\addperiod\space}
 \printfield{title}
 \setunit{\addcomma\space}
 \printtext{in}
 \printnames{editor}
 \printtext{(eds)}
 \printfield{booktitle}
 \setunit{\addcomma\space}
 \printlist{publisher}
 \setunit{\addcomma\space}
 \printfield{pages}
 \finentry
}

\usepackage[colorlinks=true, allcolors=blue]{hyperref}

\title{Worst-case Strategy-proofness\footnote{
The author is deeply indebted to Toyotaka Sakai for many helpful discussions and warm encouragement. 
He also gratefully acknowledges three anonymous referees, Shintaro Aoi, and the participants at the 2025 Autumn Meeting of the Japan Economic Association at Hirosaki University and the 31st Decentralization Conference in Japan at Chuo University for their valuable comments.}}

\author{Kazumasa Ikeda\footnote{Graduate School of Economics, Keio University, Tokyo 108-8345, Japan. Contact: {\tt ikeda-k25@keio.jp}}}

\date{May 27, 2026}

\begin{document}
\maketitle

\begin{abstract}
We introduce a new non-manipulability axiom called \emph{worst-case strategy-proofness (WCSP)}. This axiom is weaker than \emph{strategy-proofness} and stronger than \emph{non-obvious manipulability-worst (NOM-worst)} by Troyan and Morrill (2020). WCSP focuses on non-manipulability in a worst-case scenario. We examine the implications of WCSP in a voting model. Although many voting rules, such as the plurality rule, the Borda rule, and the Dowdall rule, satisfy NOM-worst, they violate WCSP. We obtain a necessary and sufficient condition for the anti-plurality rule with fixed-order tie-breaking to satisfy WCSP in terms of the numbers of agents and alternatives.
\end{abstract}
\bigskip

\textbf{Keywords}: Obvious manipulations, Strategy-proofness, Anti-plurality rule, Voting.
\smallskip

\textbf{JEL codes}: D71, D72.
\newpage

\section{Introduction}
\emph{Strategy-proofness} is a desirable axiom in voting, as it guarantees robustness against any strategic manipulations. Under a voting rule satisfying strategy-proofness, truthful reporting is always optimal regardless of how other agents report. However, the axiom is excessively stringent. As is well known, the Gibbard–Satterthwaite theorem \citep{gibbard1973, satterthwaite1975} demonstrates that any strategy-proof and \emph{onto} voting rule must be \emph{dictatorial}. Consequently, virtually all reasonable voting rules violate strategy-proofness. 
To circumvent this impossibility result, a growing body of recent literature has explored relaxations of strategy-proofness. The present study is aligned with this line of research.
In relaxing strategy-proofness, there remains room for refinement in the notion of strategic manipulation that the axiom presupposes. Strategy-proofness is robust against strategic manipulations by agents who possess perfect information about other agents' reporting; naturally, it also rules out manipulations by less-informed agents. However, in practical voting settings, agents typically do not have access to such perfect information regarding other agents' reporting. In this sense, strategic manipulations considered under strategy-proofness may be based on an excessively demanding informational assumption from a practical perspective. Accordingly, by focusing on manipulations that more closely reflect agents' behavior in practical voting settings, it becomes possible to relax strategy-proofness while still maintaining robustness against practical manipulations.

A pioneering work that relaxes strategy-proofness while incorporating more practical manipulations is a robustness axiom against \emph{obvious manipulations (non-obvious manipulability, NOM)} proposed by \textcite{troyan2020}. They introduce NOM in a general model, including matching and auction models. NOM focuses on manipulations by agents who possess no information about other agents' reporting. A mechanism satisfies NOM if, for each agent, the best-case and worst-case outcomes under truthful reporting are at least as good as the corresponding outcomes under misreporting; these two conditions are referred to as \emph{NOM-best} and \emph{NOM-worst}, respectively. They demonstrate the usefulness of NOM in matching and auction settings.

Subsequently, \textcite{aziz2021obvious} and \textcite{arribillaga2024obvious} apply the notion of NOM to voting models. \textcite{aziz2021obvious} demonstrate that most voting rules, such as the plurality rule, the Borda rule, and the Dowdall rule, satisfy NOM.\footnote{Their results are obtained for cases in which both the number of agents and the number of alternatives are at least three.} The results suggest that the notion of NOM is too weak to capture the idea of non-manipulability in voting models. This is because these voting rules are easily manipulated, and manipulations frequently occur in practical voting settings. In particular, the Borda rule is widely regarded in the literature as highly susceptible to manipulations (see, e.g., \cite{barbie_borda_2006,black_borda_1976,favardin_borda_2002}). Hence, both the notions of strategy-proofness and NOM fail to explain the variation in agents' tendency toward manipulations across voting rules. We consider that there remains room to strengthen NOM by reconsidering the types of manipulation. Unlike in matching and auction settings, in voting settings, the assumption that agents possess no information about other agents' reporting appears excessively weak. In recent real-world voting, such as political elections, public opinion polls or prediction markets provide agents with partial information about other agents' reporting. Accordingly, NOM can be strengthened by considering manipulations by more informed agents.

In light of the above discussion, this study proposes an axiom that is weaker than strategy-proofness yet stronger than NOM(-worst), while capturing a more practically relevant notion of manipulation. We call this new non-manipulability axiom \emph{worst-case strategy-proofness (WCSP)}. WCSP considers agents who do not possess perfect information about other agents' reporting, but are able to form expectations about which alternative is likely to win through information sources such as public opinion polls or prediction markets. 
In particular, we assume that agents who expect that their worst-case outcome will be realized may engage in manipulations in order to avoid it. WCSP guarantees robustness against such worst-case-avoiding manipulations.

Our focus on manipulations intended to avoid the worst-case outcome is motivated by the observation that such manipulations appear salient in practical voting settings. Findings from empirical studies of voting behavior, together with survey data from real-world elections, suggest that worst-case-avoiding manipulations occur frequently. For instance, in experiments on the Borda rule, \textcite{kube2009} show that, for agents with imperfect information about other agents' reporting, manipulations intended to avoid the worst-case outcome occur more frequently than manipulations aimed at bringing about the preferred outcome. Moreover, experimental evidence in \textcite{chytilek2017} indicates that agents' turnout is particularly high when agents seek to prevent the winning of an ``evil'' alternative. In the context of political elections, \textcite{lau1982} further shows that agents often place greater emphasis on negative information about undesirable alternatives and are inclined to report against such alternatives. As these studies suggest, avoiding the worst-case outcome constitutes a general tendency in voting behavior, and worst-case-avoiding manipulations are likely to occur. Our notion of WCSP requires robustness only against such manipulations. In this sense, WCSP imposes a weaker requirement than strategy-proofness, since it focuses only on manipulations associated with a restricted class of worst-case scenarios.

The frequent occurrence of worst-case-avoiding manipulations can be plausibly explained by several factors. First, in such worst-case scenarios, the risk associated with failed manipulation is relatively small. In addition, individuals typically place greater weight on unfavorable outcomes than on favorable ones, as emphasized in behavioral theories such as prospect theory \citep{kahneman1979}. Moreover, prospect theory suggests that individuals tend to exhibit risk-seeking behavior in the domain of losses, and situations involving the avoidance of the worst-case outcome naturally fall into this category. Consistent with this perspective, \textcite{martin2021}, using survey data from real-world elections, shows that agents with stronger risk-seeking tendencies are more likely to engage in manipulations.

In our main results, we show that the anti-plurality rule satisfies WCSP under a certain condition, but most popular scoring rules including the plurality rule, the Borda rule, and the Dowdall rule do not (Theorems \ref{thm:apv}, \ref{thm:pv}, \ref{thm:borda}, and \ref{thm:dowdall}). In fact, we find a necessary and sufficient condition for the anti-plurality rule to satisfy WCSP; the anti-plurality rule with fixed-order tie-breaking satisfies WCSP if and only if the number of agents $n$ is at least $2m-1$, where $m$ is the number of alternatives (Theorem \ref{thm:apv}). Moreover, among the class of scoring rules with fixed-order tie-breaking, no rule other than the anti-plurality rule satisfies WCSP when $n\geq m$ (Theorem \ref{thm:nonapv}). 
Our findings uncover an outstanding non-manipulability feature of the anti-plurality rule,
compared to other scoring rules. We also find a result that overcomes the incompatibility between non-manipulability and ontoness (Proposition \ref{prop:WCSP_onto}).

The paper proceeds as follows. Section 2 presents the model, defines WCSP, and examines relations among axioms. Section 3 presents the main results. Section 4 concludes.

\subsection{Related work}
In recent years, researchers have studied relaxations of strategy-proofness not only in voting but also in a large literature of mechanism design.

A pioneering example is a robustness axiom against obvious manipulations (non-obvious manipulability, NOM) by \textcite{troyan2020}. While strategy-proofness ensures that truthful reporting is a weakly dominant strategy, NOM ensures that it is both a maximin and a maximax strategy; specifically, NOM-worst (resp. NOM-best) corresponds to maximin (resp. maximax) considerations. They show that some non-strategy-proof mechanisms satisfy NOM. For example, the school-proposing DA mechanism and uniform-price auctions satisfy NOM, while the Boston mechanism and pay-as-bid auctions do not. Also, in an assignment model, \textcite{troyan2024rank} shows that rank-minimizing mechanisms with full support satisfy NOM. Furthermore, recent studies examine NOM in various models, such as cake-cutting \citep{ortega2022obvious} and allocation models (\cite{psomas2022}; \cite{arribillaga_bonifacio2025a, arribillaga_bonifacio2025b, arribillaga_risma2025a, arribillaga_risma2025b}).

\textcite{aziz2021obvious} and \textcite{arribillaga2024obvious} examine NOM in voting models. \textcite{aziz2021obvious} show that most voting rules satisfy NOM. Moreover, NOM consists of two robustness conditions, \emph{``best"} and \emph{``worst,"} and they show that, for many voting rules, the worst condition implies the best. \textcite{arribillaga2024obvious} examine NOM for tops-only voting rules, such as the median-voter rule.

We also mention studies that relax strategy-proofness in ways logically independent of NOM and WCSP. \textcite{fernandez2020regret} introduces \emph{regret-free truth-telling} in a matching model. \textcite{arribillaga2022regret} apply this notion to a voting model. \textcite{gori2021manipulation} considers ill-informed agents and proposes \emph{WMG-strategy-proofness} in a voting model. \textcite{dindar2025mini} introduce \emph{minimal strategy-proofness} in a rank aggregation model.

\section{Preliminaries}

\subsection{Model}
Let $N=\{1,\ldots,n\}$ be the set of \emph{agents} with $n\geq2$. Let $X=\{x_1,\ldots,x_m\}$ be the set of \emph{alternatives} with $m\geq2$.
Each agent $i\in N$ has a \emph{(strict) preference} $P_i$ on $X$.\footnote{A binary relation $P_i$ on $X$ is a strict preference if it satisfies \emph{completeness} (for each $x,y\in X$, either $x\mathrel{P_i}y$ or $y\mathrel{P_i}x$), \emph{transitivity} (for each $x,y,z\in X$, $x\mathrel{P_i}y$ and $y\mathrel{P_i}z$ imply $x\mathrel{P_i}z$), and \emph{anti-symmetry} (for no $x,y\in X$, $x\mathrel{P_i}y$ and $y\mathrel{P_i}x$).} Let $\mathcal{P}$ be the set of all preferences on $X$. A \emph{preference profile} is an $n$-tuple ${P}=(P_1,\ldots,P_n) \in \mathcal{P}^n$. As usual, we write $P=(P_i,P_{-i})\in\mathcal{P}^{n}$ to distinguish agent $i$'s preference from the other agents' preferences. A \emph{voting rule} is a function $f:\mathcal{P}^n\to X$ that maps each preference profile $P\in\mathcal{P}^n$ to an alternative $ f(P)\in X$.

Let $R_i$ be the \emph{weak preference} associated with $P_i$; that is, for each $x, y \in X$, $x\mathrel{R_i}y$ if and only if either $x=y$ or $x\mathrel{P_i}y$. 
Let $t_k(P_i)\in X$ be the \emph{$k$-th ranked} (from top) alternative according to $P_i$; that is, for each $k=\{1,\ldots,m\}$, $\big|\{x\in X \mid x\mathrel{R_i} t_k(P_i)\}\big|=k$.
Given any $f : \mathcal{P}^n \to X$ and $P_i, P'_i \in \mathcal{P}$, we say that $\tilde P_{-i} \in \mathcal{P}^{n-1}$ is \emph{worst} (resp. \emph{best}) \emph{for} $(P_i,f(P'_i,\cdot))$ if for each $P_{-i} \in \mathcal{P}^{n-1}$,
\[f(P'_i,P_{-i}) \mathrel{R_i} f(P'_i,\tilde P_{-i})\;\big(\text{resp. }f(P'_i,\tilde P_{-i}) \mathrel{R_i} f(P'_i,{P}_{-i})\big).\]

\subsection{Worst-case strategy-proofness}
We first define two existing non-manipulability axioms, and then introduce a new one: worst-case strategy-proofness.

\begin{dfn}
A voting rule $ f : \mathcal{P}^n \to X $ is \textbf{strategy-proof} if for each $i\in N$, each $ P_i , P'_i \in \mathcal{P} $, and each $P_{-i}\in \mathcal{P}^{n-1}$,
\[f(P_i, P_{-i}) \mathrel{R_i} f(P'_i, P_{-i}).\]
\end{dfn}

Strategy-proofness guarantees that truthful reporting is always optimal, but it is stringent. Hence, we need a weaker axiom than strategy-proofness.

\textcite{troyan2020} introduce the following relaxation of strategy-proofness:\footnote{\textcite{troyan2020} restricts the imposition of NOM-best and NOM-worst to profitable $P'_i$. However, as any non-profitable $P'_i$ necessarily satisfies both conditions, this is equivalent to imposing them for all $P'_i$. }

\begin{dfn}[\cite{troyan2020}]
A voting rule $ f : \mathcal{P}^n \to X $ is \textbf{not obviously manipulable (NOM)} if the following two conditions hold:
\begin{enumerate}
\item \textbf{NOM-worst } For each $i\in N$ and each $P_i, P'_i \in \mathcal{P} $, whenever $\tilde P_{-i}\in \mathcal{P}^{n-1}$ is worst for $(P_i,f(P_i,\cdot))$ and $\tilde P'_{-i}\in \mathcal{P}^{n-1}$ is worst for $(P_i,f(P'_i,\cdot))$,
\[f(P_i,\tilde P_{-i}) \mathrel{R_i} f(P'_i,\tilde P'_{-i}).\]
\item \textbf{NOM-best } For each $i\in N$ and each $ P_i,P'_i \in \mathcal{P} $, whenever $\tilde P_{-i}\in \mathcal{P}^{n-1}$ is best for $(P_i,f(P_i,\cdot))$ and $\tilde P'_{-i}\in \mathcal{P}^{n-1}$ is best for $(P_i,f(P'_i,\cdot))$,
\[f(P_i, \tilde P_{-i}) \mathrel{R_i} f(P'_i,\tilde P'_{-i}).\]
\end{enumerate}
\end{dfn}

NOM focuses on manipulations by agents who are uninformed about the other agents' preferences. Each agent can only compare the worst-case (resp. best-case) outcome of truthful reporting with that of misreporting and manipulates if the latter is better. NOM prohibits such \emph{obvious manipulations}.

\textcite{aziz2021obvious} show that if $n\ge3$, most voting rules, including the plurality rule, the Borda rule, and the Dowdall rule, satisfy NOM, suggesting that NOM is too weak to capture the idea of non-manipulability in a voting model. Moreover, they show that NOM-best is much weaker than NOM-worst, since every scoring rule satisfying NOM-worst also satisfies NOM-best. Their results motivate us to introduce a stronger axiom than NOM-worst.

Our new non-manipulability axiom is as follows: 

\begin{dfn}
A voting rule $ f : \mathcal{P}^n \to X $ is \textbf{worst-case strategy-proof (WCSP)} if for each $i\in N$, each $ P_i , P'_i \in \mathcal{P} $, and each $\tilde P_{-i}\in \mathcal{P}^{n-1}$ that is worst for $(P_i,f(P_i,\cdot))$, 
\[f(P_i, \tilde P_{-i}) \mathrel{R_i} f(P'_i, \tilde P_{-i}).\]
\end{dfn}

WCSP focuses on manipulations by agents who are able to form expectations about which alternative is likely to win. If an agent were to predict a \emph{worst-case scenario}—the other agents' preferences that would lead to the worst-case outcome under her truthful reporting—she might try to manipulate. WCSP ensures that truthful reporting is optimal for each agent under all their worst-case scenarios. 

\begin{figure}[tbp]
\centering
\includegraphics[width=0.6\linewidth]{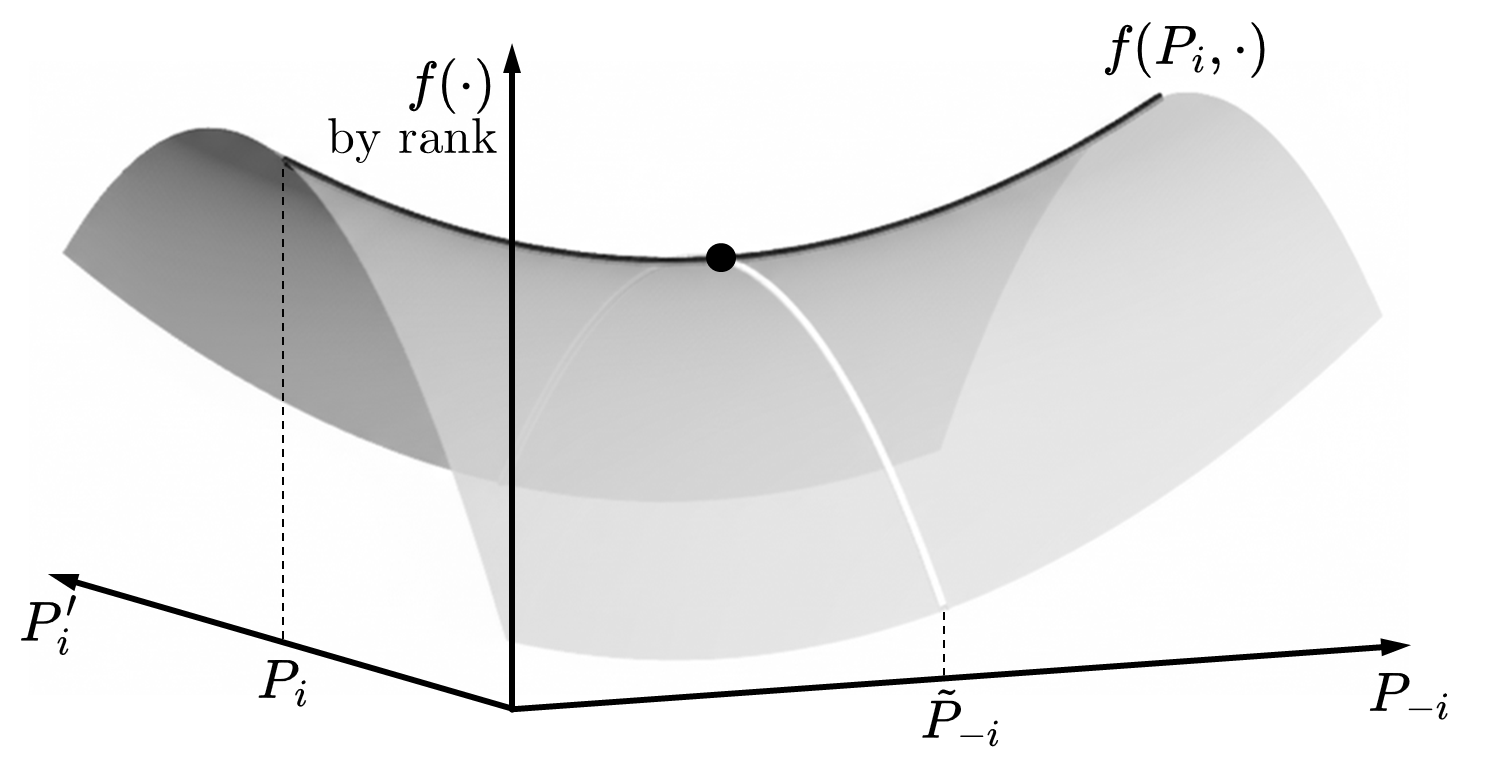}
\caption{Conceptual illustration of WCSP}
\label{fig:1}
\end{figure}

Fig. \ref{fig:1} illustrates the concept of WCSP. Given a voting rule $f:\mathcal{P}^n\to X$ and a preference $P_i\in\mathcal{P}$, the horizontal axis shows all possible preferences of the other agents $P_{-i} \in \mathcal{P}^{n-1}$; the depth axis shows all possible misreporting $P'_i \in \mathcal{P}$; and the vertical axis shows $f(\cdot)$, ordered by the rank according to $P_i$ (highest = $t_1(P_i)$). The black line represents $f(P_i, \cdot)$, with the black dot marking its lowest point at the worst-case scenario $\tilde P_{-i}\in \mathcal{P}^{n-1}$. WCSP ensures that the dot is also the highest point under the same $\tilde P_{-i}$. This point can be regarded as a saddle point in the broad sense.

Next, we clarify the relations between the three non-manipulability axioms. The following immediately follows from the definitions.

\begin{prop}\label{prop:SP_WCSP}
Strategy-proofness implies WCSP.
\end{prop}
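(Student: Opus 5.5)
The plan is to argue directly from the definitions, since WCSP is simply strategy-proofness with its quantifier over $P_{-i}$ restricted to a subset. Let $f$ be strategy-proof. We must verify the WCSP inequality for an arbitrary agent $i\in N$, arbitrary preferences $P_i,P'_i\in\mathcal{P}$, and an arbitrary $\tilde P_{-i}\in\mathcal{P}^{n-1}$ that is worst for $(P_i,f(P_i,\cdot))$.

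The key step is to observe that such a $\tilde P_{-i}$ is in particular an element of $\mathcal{P}^{n-1}$. Strategy-proofness quantifies over every $P_{-i}\in\mathcal{P}^{n-1}$, so we may instantiate it at $P_{-i}=\tilde P_{-i}$. This yields
\[f(P_i,\tilde P_{-i}) \mathrel{R_i} f(P'_i,\tilde P_{-i}),\]
which is exactly the WCSP requirement. The worst-case property of $\tilde P_{-i}$ is never used beyond membership in $\mathcal{P}^{n-1}$.

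There is no genuine obstacle. The only point to confirm is that WCSP asks for the comparison at the \emph{same} profile $\tilde P_{-i}$ for the truthful and the misreported preference. It does not compare against a possibly different worst profile for $f(P'_i,\cdot)$, as NOM-worst does. Hence the specialization of strategy-proofness is immediate. Since $i$, $P_i$, $P'_i$ and $\tilde P_{-i}$ were arbitrary, $f$ satisfies WCSP.
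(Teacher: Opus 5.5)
Your proposal is correct and matches the paper, which states that the proposition follows immediately from the definitions. WCSP is strategy-proofness restricted to those $\tilde P_{-i}$ that are worst for $(P_i,f(P_i,\cdot))$, so instantiating strategy-proofness at $P_{-i}=\tilde P_{-i}$ yields the required inequality.
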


Then, we show that WCSP is stronger than NOM-worst. The following lemma gives an equivalent formulation of NOM-worst. Here, NOM-worst is reformulated in terms of the existence (or non-existence) of other agents' preferences, rather than in terms of outcome comparisons. 

\begin{lem}\label{lem:NOMw}
The following are equivalent for any voting rule $f : \mathcal{P}^n \to X$:
\begin{enumerate}
\item $ f $ satisfies NOM-worst.
\item For each $i\in N$, each $ P_i, P'_i \in \mathcal{P} $, and each $\tilde P_{-i}\in \mathcal{P}^{n-1}$ that is worst for $(P_i,f(P_i,\cdot))$, there exists $P^\star_{-i}\in \mathcal{P}^{n-1}$ such that
\[f(P_i,\tilde P_{-i}) \mathrel{R_i} f(P'_i,P^\star_{-i}).\]
\end{enumerate}
\end{lem}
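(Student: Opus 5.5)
Both directions follow from the definitions together with the observation that worst-case profiles always exist. Since $\mathcal{P}^{n-1}$ is finite and $R_i$ is a complete, transitive order on $X$, for every $P_i,P'_i$ the set $\{f(P'_i,P_{-i})\mid P_{-i}\in\mathcal{P}^{n-1}\}$ has an $R_i$-minimal element. Any $P_{-i}$ attaining that element is worst for $(P_i,f(P'_i,\cdot))$. So the quantifiers ``whenever $\tilde P_{-i}$ is worst'' in NOM-worst are never vacuous. Moreover, all worst profiles for a given pair yield the same outcome, because preferences are strict and antisymmetric.

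\emph{(1) $\Rightarrow$ (2).} Fix $i$, $P_i,P'_i$, and a profile $\tilde P_{-i}$ that is worst for $(P_i,f(P_i,\cdot))$. By existence, choose $\tilde P'_{-i}$ worst for $(P_i,f(P'_i,\cdot))$. NOM-worst then gives $f(P_i,\tilde P_{-i}) \mathrel{R_i} f(P'_i,\tilde P'_{-i})$, so we may set $P^\star_{-i}=\tilde P'_{-i}$.

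\emph{(2) $\Rightarrow$ (1).} Fix $i$, $P_i,P'_i$, a profile $\tilde P_{-i}$ worst for $(P_i,f(P_i,\cdot))$, and a profile $\tilde P'_{-i}$ worst for $(P_i,f(P'_i,\cdot))$. Condition (2) supplies some $P^\star_{-i}$ with $f(P_i,\tilde P_{-i}) \mathrel{R_i} f(P'_i,P^\star_{-i})$. Because $\tilde P'_{-i}$ is worst, $f(P'_i,P^\star_{-i}) \mathrel{R_i} f(P'_i,\tilde P'_{-i})$. Transitivity of $R_i$ then yields $f(P_i,\tilde P_{-i}) \mathrel{R_i} f(P'_i,\tilde P'_{-i})$, which is NOM-worst.

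There is no real obstacle. The only point needing care is the existence of a worst profile in (1) $\Rightarrow$ (2). It comes from finiteness of $\mathcal{P}^{n-1}$ and completeness of $R_i$. Transitivity of $R_i$ is immediate from transitivity of $P_i$: the definition of $R_i$ as ``$x=y$ or $x\mathrel{P_i}y$'' leaves only trivial cases to check.
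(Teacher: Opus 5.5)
Your proof is correct and follows essentially the same argument as the paper: (2)$\Rightarrow$(1) uses the worst-case property of $\tilde P'_{-i}$ together with transitivity of $R_i$, and (1)$\Rightarrow$(2) instantiates NOM-worst with $P^\star_{-i}=\tilde P'_{-i}$. Your remark that a worst profile for $(P_i,f(P'_i,\cdot))$ always exists, by finiteness of $\mathcal{P}^{n-1}$ and completeness of $R_i$, makes explicit the step the paper leaves implicit when it says the first direction ``immediately follows.''
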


\begin{proof}
Let $ f:\mathcal{P}^n \to X$ be a voting rule.\\
$(1\Rightarrow2)$ By the definition of NOM-worst, it immediately follows. \\
$(2\Rightarrow1)$ Let $\tilde P'_{-i} \in \mathcal{P}^{n-1}$ be worst for $(P_i,f(P'_i,\cdot))$. Then, $f(P'_i,P^\star_{-i}) \mathrel{R_i} f(P'_i,\tilde P'_{-i})$. Thus, $f(P_i,\tilde P_{-i}) \mathrel{R_i} f(P'_i,\tilde P'_{-i})$.
\end{proof}

According to the definition of WCSP and Lemma \ref{lem:NOMw}, the following relation is established.

\begin{prop}\label{prop:WCSP_NOMw}
WCSP implies NOM-worst.
\end{prop}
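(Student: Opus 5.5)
The plan is to verify condition 2 of Lemma \ref{lem:NOMw} directly, since that lemma shows it is equivalent to NOM-worst. The point is that WCSP supplies the existential witness in condition 2 for free: the witness can be taken to be the worst-case scenario itself.

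Let $f$ satisfy WCSP. Fix $i\in N$, preferences $P_i,P'_i\in\mathcal{P}$, and a profile $\tilde P_{-i}\in\mathcal{P}^{n-1}$ that is worst for $(P_i,f(P_i,\cdot))$. We must exhibit some $P^\star_{-i}$ with $f(P_i,\tilde P_{-i}) \mathrel{R_i} f(P'_i,P^\star_{-i})$.

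Set $P^\star_{-i}=\tilde P_{-i}$. WCSP, applied to exactly this $i$, $P_i$, $P'_i$ and $\tilde P_{-i}$, gives
\[f(P_i,\tilde P_{-i}) \mathrel{R_i} f(P'_i,\tilde P_{-i}),\]
which is the required inequality. So condition 2 of Lemma \ref{lem:NOMw} holds, and by that lemma $f$ satisfies NOM-worst.

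There is no substantive obstacle. The only point to check is that a worst profile for $(P_i,f(P_i,\cdot))$ always exists, so that the quantifiers in both definitions range over nonempty sets. This holds because $\mathcal{P}^{n-1}$ is finite and $R_i$ is a complete order. In any case the implication holds vacuously otherwise, and the same finiteness is what the direction $(2\Rightarrow1)$ of Lemma \ref{lem:NOMw} uses when it picks a worst profile for $(P_i,f(P'_i,\cdot))$.
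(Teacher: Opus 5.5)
Your proof is correct and is essentially the paper's argument: the paper also derives the proposition from the definition of WCSP together with Lemma~\ref{lem:NOMw}, taking the worst-case profile $\tilde P_{-i}$ itself as the witness $P^\star_{-i}$ in condition 2. Your remark that worst profiles exist because $\mathcal{P}^{n-1}$ is finite and $R_i$ is complete is correct and a harmless addition.
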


\begin{figure}[tbp]
\centering
\includegraphics[width=0.8\linewidth]{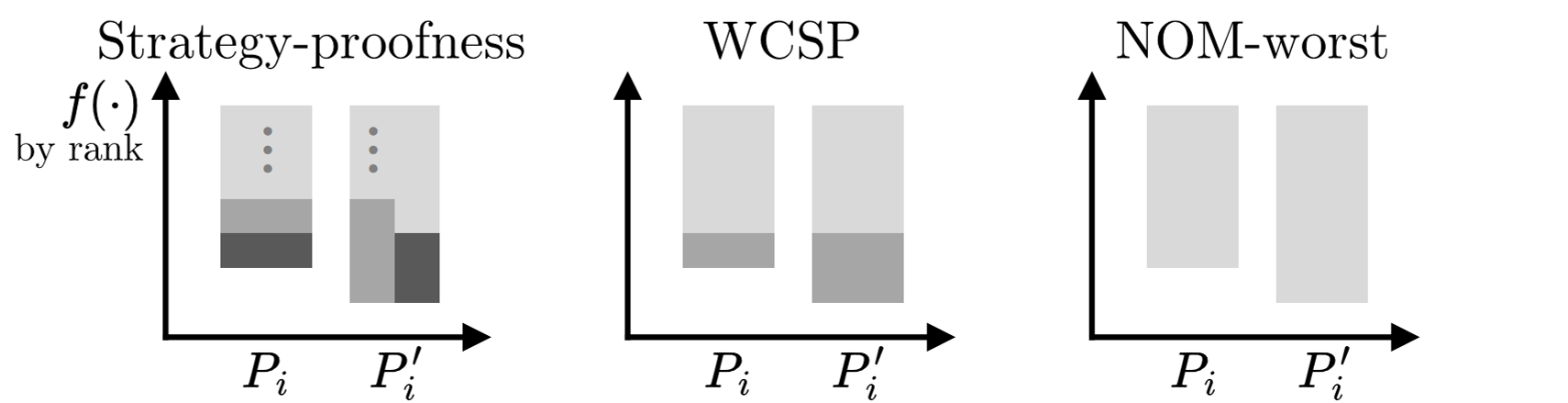}
\caption{Conceptual illustration of the three axioms}
\label{fig:2}
\end{figure}

By Propositions \ref{prop:SP_WCSP} and \ref{prop:WCSP_NOMw}, it follows that WCSP is weaker than strategy-proofness but stronger than NOM-worst.

Fig. \ref{fig:2} illustrates the corresponding non-manipulability concepts. Given a voting rule $f:\mathcal{P}^n\to X$ and a preference $P_i\in \mathcal{P}$, each square represents the range of outcomes $f(P_i,\cdot)$ or $f(P'_i,\cdot)$, where $P'_i\in \mathcal{P}$ is a misreporting. NOM-worst ensures that the worst of $f(P_i,\cdot)$ is at least as good as the worst of $f(P'_i,\cdot)$. WCSP ensures that $P_i$ is optimal in the worst-case scenario. Strategy-proofness guarantees that $P_i$ is optimal for any the other agents' preferences.

Next, we demonstrate some implications of WCSP. We show that WCSP is equivalent to strategy-proofness for \emph{tops-only} voting rules. 

\begin{dfn}
A voting rule $ f : \mathcal{P}^n \to X $ is \textbf{tops-only} if for each $i\in N$ and each $ P ,P'\in \mathcal{P}^{n}$ such that $t_1(P_i')=t_1(P_i)$, $f(P')=f(P)$.
\end{dfn}

\begin{prop}\label{prop:tops}
A tops-only voting rule is WCSP if and only if it is strategy-proof.
\end{prop}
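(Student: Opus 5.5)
The ``if'' direction is Proposition \ref{prop:SP_WCSP}: strategy-proofness implies WCSP for every voting rule, so it does so for tops-only rules in particular. The substance is the ``only if'' direction. I will argue by contradiction. The idea is that a tops-only rule lets us re-rank agent $i$'s preference below the top without changing any outcome. This lets us turn an arbitrary profitable manipulation into a manipulation in a worst-case scenario.

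Let $f$ be tops-only and satisfy WCSP, and suppose $f$ is not strategy-proof. Then there exist $i\in N$, $P_i,P'_i\in\mathcal{P}$ and $P_{-i}\in\mathcal{P}^{n-1}$ with $b\mathrel{P_i}a$, where $a=f(P_i,P_{-i})$ and $b=f(P'_i,P_{-i})$; in particular $a\neq b$. Let $c=t_1(P_i)$. Since some alternative is strictly preferred to $a$ under $P_i$, we have $a\neq c$.

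Construct a new preference $Q_i\in\mathcal{P}$ with $t_1(Q_i)=c$ and $t_m(Q_i)=a$, ordering the remaining alternatives arbitrarily. This is possible because $a\neq c$ and $m\ge 2$. Since $t_1(Q_i)=t_1(P_i)$, tops-only applied to agent $i$ alone (with $P_{-i}$ fixed) gives
\[f(Q_i,P_{-i})=f(P_i,P_{-i})=a.\]
Because $a$ is the bottom alternative of $Q_i$, every $P''_{-i}$ satisfies $f(Q_i,P''_{-i})\mathrel{R^{Q}_i}a$, where $R^{Q}_i$ denotes the weak preference associated with $Q_i$. Hence $P_{-i}$ is worst for $(Q_i,f(Q_i,\cdot))$.

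Now apply WCSP to an agent whose true preference is $Q_i$, with misreport $P'_i$ and worst-case scenario $P_{-i}$. It yields $a\mathrel{R^{Q}_i}f(P'_i,P_{-i})=b$. Since $a$ is ranked last in $Q_i$, this forces $b=a$, contradicting $a\neq b$. Therefore $f$ is strategy-proof.

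The only delicate step is the choice of $Q_i$. It must keep the top alternative so that tops-only preserves the outcome $a$, and it must put $a$ last so that $P_{-i}$ becomes a worst case. Both requirements are compatible exactly because $a\neq t_1(P_i)$, and that inequality follows from the manipulation being profitable. No other obstacle is expected.
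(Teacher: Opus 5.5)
Your proof is correct and matches the paper's argument essentially step for step. The paper also builds $P^\star_i$ with $t_1(P^\star_i)=t_1(P_i)$ and $t_m(P^\star_i)=f(P)$, uses tops-onlyness to keep the outcome, concludes that $P_{-i}$ is worst for $(P^\star_i,f(P^\star_i,\cdot))$, and obtains the WCSP violation from the misreport $P'_i$; the converse direction is likewise cited from Proposition~\ref{prop:SP_WCSP}.
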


\begin{proof}
Let $ f:\mathcal{P}^n \to X$ be a tops-only voting rule.\\
$(\Longrightarrow)$ Suppose not. Then, there exist $i\in N$, $P \in \mathcal{P}^{n}$, and $P'_i \in \mathcal{P}$ such that $f(P'_i, P_{-i}) \mathrel{P_i} f(P)$. Thus, $f(P'_i, P_{-i}) \neq f(P)$ and $f(P)\neq t_1(P_i)$. Take any $P^\star_i \in \mathcal{P}$ such that $t_1(P^\star_i)=t_1(P_i)$ and $t_m(P^\star_i)=f(P)$. Since $ f $ is tops-only, $f(P^\star_i, P_{-i})=f(P)$. Thus, $ P_{-i}$ is worst for $(P^\star_i ,f(P^\star_i,\cdot))$. Since $f(P'_i, P_{-i}) \neq f(P)$, $f(P'_i, P_{-i})\neq f(P^\star_i, P_{-i})$. Thus, $f(P'_i, P_{-i}) \mathrel{P^\star_i}f(P^\star_i, P_{-i})$. Therefore, $ f $ is not WCSP. \\
$(\Longleftarrow)$ It follows from Proposition \ref{prop:SP_WCSP}.
\end{proof}

Then, we demonstrate that WCSP and strategy-proofness are equivalent when there are only two alternatives.

\begin{prop}\label{prop:m=2}
Suppose $m = 2$. A voting rule is WCSP if and only if it is strategy-proof.
\end{prop}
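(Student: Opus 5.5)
The direction from strategy-proofness to WCSP is immediate from Proposition \ref{prop:SP_WCSP}, so the work is in showing that WCSP implies strategy-proofness when $m=2$. I would argue by contraposition, in the same style as the proof of Proposition \ref{prop:tops}. Write $X=\{a,b\}$, so that $\mathcal{P}$ consists of just two preferences, $P^a$ (with $a$ on top) and $P^b$ (with $b$ on top). Suppose $f$ is not strategy-proof. Then there exist $i\in N$, $P_i\in\mathcal{P}$, $P_{-i}\in\mathcal{P}^{n-1}$ and $P'_i\in\mathcal{P}$ such that $f(P'_i,P_{-i})\mathrel{P_i}f(P_i,P_{-i})$. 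Since $m=2$, this forces $P'_i\neq P_i$ and
\[f(P_i,P_{-i})=t_2(P_i),\qquad f(P'_i,P_{-i})=t_1(P_i).\]

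The key observation is that, with two alternatives, any profile whose outcome is the bottom alternative of $P_i$ is automatically a worst-case scenario. Indeed, $t_2(P_i)$ is the $R_i$-minimal element of $X$, so for every $Q_{-i}\in\mathcal{P}^{n-1}$ we have $f(P_i,Q_{-i})\mathrel{R_i}t_2(P_i)=f(P_i,P_{-i})$. Hence $P_{-i}$ is worst for $(P_i,f(P_i,\cdot))$. WCSP then requires
\[f(P_i,P_{-i})\mathrel{R_i}f(P'_i,P_{-i}),\]
that is, $t_2(P_i)\mathrel{R_i}t_1(P_i)$, which contradicts $t_1(P_i)\mathrel{P_i}t_2(P_i)$. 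Therefore $f$ violates WCSP, which establishes the contrapositive.

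I do not expect any real obstacle. The only step needing care is checking that ``worst'' in the paper's sense (being $R_i$-minimal among the outcomes $f(P_i,\cdot)$) is satisfied at $P_{-i}$. This holds trivially because the outcome there is the global bottom of $P_i$. Unlike the tops-only case, no auxiliary preference $P^\star_i$ is needed, since any manipulation with $m=2$ already starts from the bottom alternative.
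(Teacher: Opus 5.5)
Your proof is correct and is essentially the paper's argument. The paper notes that with $m=2$ every voting rule is tops-only (a preference is determined by its top) and invokes Proposition~\ref{prop:tops}; specialized to $m=2$, that proof forces the auxiliary preference $P^\star_i$ to coincide with $P_i$, and it reduces to exactly your observation that a profile yielding $t_2(P_i)$ is automatically worst for $(P_i,f(P_i,\cdot))$ — you simply inline this specialization, which makes the argument self-contained instead of reusing the general tops-only result.
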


\begin{proof}
Assume that $m=2$. Then, any voting rule is tops-only. By Proposition \ref{prop:tops}, any voting rule is WCSP if and only if it is strategy-proof.
\end{proof}

By Proposition \ref{prop:m=2}, when there are only two alternatives, WCSP has no new significance as an axiom. Thus, from the next section onward, we will assume the existence of at least three alternatives.

Then, we use the Gibbard–Satterthwaite theorem \citep{gibbard1973,satterthwaite1975} to examine WCSP for a \emph{onto} voting rule when there are at least three alternatives.

\begin{dfn}
A voting rule $ f : \mathcal{P}^n \to X $ is \textbf{onto}, if for each $x\in X$, there exists $ P \in \mathcal{P}^n $ such that $f(P)=x$.
\end{dfn}

\begin{dfn}
A voting rule $ f : \mathcal{P}^n \to X $ is \textbf{dictatorial}, if there exists $i \in N$ such that for each $P \in \mathcal{P}^ n$, $f (P) = t_1(P_i)$. 
\end{dfn}

\begin{lem}[\cite{gibbard1973,satterthwaite1975}]\label{lem:GS}
Suppose $m\geq3$. A voting rule is strategy-proof and onto if and only if it is dictatorial.
\end{lem}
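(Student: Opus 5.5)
The statement is the Gibbard–Satterthwaite theorem, which the paper takes from \cite{gibbard1973,satterthwaite1975}. Were I to prove it rather than cite it, I would treat the two directions separately. The easy direction ($\Longleftarrow$) is immediate. If $f(P)=t_1(P_i)$ for some fixed $i$, then $f$ is onto, since any profile in which $i$ ranks $x$ first yields $x$. It is also strategy-proof: agent $i$ always receives her top alternative, and any $j\neq i$ cannot affect the outcome. All the work lies in ($\Longrightarrow$), and there I would follow the direct pivotal-voter argument in the style of Reny rather than reducing to Arrow's theorem.

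\emph{Step 1 (monotonicity).} Strategy-proofness implies Maskin monotonicity. Suppose $f(P)=x$ and $P'$ is such that, for every $i$, every alternative below $x$ in $P_i$ remains below $x$ in $P'_i$. Then $f(P')=x$. This follows by changing one agent at a time and applying strategy-proofness twice (from $P_i$ to $P'_i$ and back), which rules out any switch of outcome.

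\emph{Step 2 (unanimity).} Monotonicity and ontoness give Pareto efficiency. Suppose $y$ Pareto-dominates $x$ at $P$ and $f(P)=x$. Take $P^y$ with $f(P^y)=y$ (it exists by ontoness). Lifting $x$ and $y$ to the top two positions of every ranking, with $y$ above $x$ wherever $y$ is preferred, keeps the outcome at $y$ by monotonicity. A further monotonic transformation from $P$ shows the outcome must be $x$, a contradiction. This uses $m\ge3$ only mildly; it is essentially the standard argument.

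\emph{Step 3 (pivotal voter and dictatorship).} Fix distinct $a,b$. Start from a profile in which everyone ranks $a$ first and $b$ second. Move $b$ above $a$ for agents $1,2,\dots$ in turn. By unanimity the outcome starts at $a$ and ends at $b$, and by strategy-proofness together with Pareto it is always $a$ or $b$. So there is a first agent $n^*$ whose switch changes the outcome to $b$. Using monotonicity and a third alternative $c$ (here $m\ge3$ is essential), I would show that $n^*$ dictates over $a$ in the following sense: whenever $n^*$ ranks $a$ first, the outcome is $a$, regardless of the others. Repeating the argument for every $a$ yields pivotal agents $n(a)$. A consistency argument then shows all the $n(a)$ coincide: two distinct agents each ranking their own alternative first would force two distinct outcomes. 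Hence $f(P)=t_1(P_{n^*})$ for all $P$.

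I expect the main obstacle to be the middle of Step 3: showing that the pivotal agent's power is independent of the others' rankings. This requires a careful chain of monotonic transformations through the third alternative $c$, repositioning $c$ at the bottom or just below $a$ for the agents before and after $n^*$, so that monotonicity and Pareto pin the outcome down at every stage. I would first carry this out for profiles in which $b$ is just below or just above $a$, and then extend it to arbitrary profiles using Step 1.
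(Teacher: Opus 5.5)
The paper gives no proof of this lemma; it cites the Gibbard--Satterthwaite theorem. Any self-contained argument is therefore a different route, and Reny's pivotal-voter proof is a sensible one to choose. Your direction ($\Longleftarrow$), Step~1, Step~2 (which in fact does not use $m\ge3$ at all) and your closing consistency argument are all correct.

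The gap is Step~3, which carries the whole content of the theorem and which you leave as a plan. Worse, the Reny-style chain you allude to does not go through from the pivot you chose. Write $Q^k$ for the profile in which agents $1,\dots,k$ rank $b,a$ on top and agents $k+1,\dots,n$ rank $a,b$ on top. Then $f(Q^{n^*-1})=a$ and $f(Q^{n^*})=b$.

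To obtain ``$f(P)=a$ whenever $t_1(P_{n^*})=a$'' via Step~1, you need a base profile with outcome $a$ in which every $j\neq n^*$ ranks $a$ \emph{last}, because monotonicity only lets lower contour sets grow. Your pivot gives no way to reach such a profile for the agents $j>n^*$:
\begin{itemize}
\item These agents rank $a$ first in $Q^{n^*-1}$, so monotonicity at $a$ cannot lower $a$ for them.
\item Reny's pairing trick also fails. Let $R^4$ come from $Q^{n^*}$ by a monotonic transformation at $b$, and let $R^3$ differ from $R^4$ only in $n^*$ swapping back to $a,b$. Excluding $f(R^3)=b$ requires $R^3\to Q^{n^*-1}$ to be monotonic at $b$. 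For each $j>n^*$, these two requirements force the lower contour set of $b$ to be exactly $X\setminus\{a\}$. So $j$ must still rank $a$ first and $b$ second in $R^3$.
\end{itemize}
No repositioning of $c$ gets around this, so the plan ``first for profiles in which $b$ is just below or just above $a$, then extend by Step~1'' cannot be completed as described.

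The missing ingredient is Reny's choice of starting profile. Put $a$ first and $b$ \emph{last} for everybody, and raise $b$ one position at a time, agent by agent. Monotonicity keeps the outcome at $a$ while $b$ is below $a$, and Pareto gives $a$ or $b$ once $b$ passes $a$. At the resulting pivot the agents $j>n^*$ rank $b$ last, so $b$'s lower contour set for them is $\{b\}$, and they can be rearranged freely under monotonicity at $b$. The chain is then:
\begin{enumerate}
\item From the $b$-profile, push $a$ to the bottom for $j<n^*$ and to just above $b$ for $j>n^*$. The outcome stays $b$.
\item Switch $n^*$ back to $a,b$. The outcome is $a$ or $b$, since an outcome $d\notin\{a,b\}$ would survive the swap. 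It cannot be $b$, because monotonicity at $b$ would carry $b$ back to the $a$-profile. So it is $a$.
\item With a third alternative $c$, let agents $j<n^*$ end with $c,b,a$, agent $n^*$ begin with $a,c,b$, and agents $j>n^*$ end with $c,a,b$. The outcome is still $a$ by monotonicity.
\item Swap the bottom pair $a,b$ for $j>n^*$. This leaves every other alternative's lower contour sets unchanged, so the outcome stays in $\{a,b\}$. Since $c$ now Pareto-dominates $b$, the outcome is $a$.
\end{enumerate}
Now $a$ is last for every $j\neq n^*$ and first for $n^*$, so Step~1 yields $f(P)=a$ whenever $t_1(P_{n^*})=a$. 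Your $n^*$ does coincide with this pivot, but that can only be seen once dictatorship over $a$ is established.
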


By the definitions, the following lemma immediately follows.

\begin{lem}\label{lem:dic_tops}
Any dictatorial voting rule is tops-only.
\end{lem}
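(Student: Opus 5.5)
The claim is that every dictatorial voting rule is tops-only, and I will verify it straight from the two definitions; no earlier result is needed. Fix a dictatorial rule $f$. By definition there is an agent $d\in N$ with $f(P)=t_1(P_d)$ for every profile $P\in\mathcal{P}^n$. The tops-only condition concerns a single agent $i$ and two profiles $P,P'$ with $t_1(P'_i)=t_1(P_i)$. As the surrounding usage in Proposition~\ref{prop:tops} indicates, the intended reading is that $P$ and $P'$ differ only in agent $i$'s component, so $P'=(P'_i,P_{-i})$. I will argue under that reading, and also note why the conclusion survives if only the two tops are assumed equal.

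The argument splits into two cases according to whether the agent $i$ whose preference changes is the dictator. If $i=d$, then $f(P')=t_1(P'_d)=t_1(P_d)=f(P)$, using the hypothesis $t_1(P'_i)=t_1(P_i)$ with $i=d$. If $i\neq d$, then agent $d$'s preference is the same in both profiles, since $P'_d=P_d$. Hence $f(P')=t_1(P'_d)=t_1(P_d)=f(P)$. In either case $f(P')=f(P)$, which is exactly the tops-only condition for $f$.

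The only real obstacle is the literal wording of the tops-only definition. Read word for word, it does not say that the other components of $P$ and $P'$ coincide, and under that reading the $i\neq d$ case could fail. I will therefore state explicitly that the definition is the standard one, in which only agent $i$'s report is varied; this is how it is used in the proof of Proposition~\ref{prop:tops}. With that clarification, the two-case computation above completes the proof.
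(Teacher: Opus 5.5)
Your proof is correct and is the same direct check from the definitions that the paper relies on; the paper simply asserts that the lemma follows immediately. Your one-agent-deviation reading of tops-only is equivalent to the standard ``all agents' tops coincide'' reading, and under that reading the check reduces to $f(P')=t_1(P'_d)=t_1(P_d)=f(P)$.

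One point needs fixing. Your first paragraph promises to show the conclusion survives when only agent $i$'s two tops are assumed equal. But under that literal reading the lemma is false: take $i\neq d$, keep $P_i$ fixed, and change the dictator's top. Your last paragraph already notices this, so you should drop the promise.
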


Then, we can derive the following proposition using the above lemmas.

\begin{prop}\label{prop:GS2}
Suppose $m\geq3$. A voting rule is WCSP, tops-only, and onto if and only if it is dictatorial.
\end{prop}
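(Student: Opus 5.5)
The plan is to prove the two directions separately, reducing everything to Proposition~\ref{prop:tops}, Lemma~\ref{lem:GS}, Lemma~\ref{lem:dic_tops}, and Proposition~\ref{prop:SP_WCSP}. No new construction should be needed; the statement is essentially a composition of these results.

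\emph{Forward direction.} Let $f$ be WCSP, tops-only, and onto, with $m\geq3$. Since $f$ is tops-only and WCSP, Proposition~\ref{prop:tops} shows that $f$ is strategy-proof. As $f$ is also onto and $m\geq 3$, Lemma~\ref{lem:GS} gives that $f$ is dictatorial.

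\emph{Reverse direction.} Let $f$ be dictatorial with dictator $i$, so that $f(P)=t_1(P_i)$ for every $P$. I will verify the three properties one at a time.
\begin{enumerate}
\item \emph{Tops-only.} This is Lemma~\ref{lem:dic_tops}.
\item \emph{Strategy-proof.} This follows from Lemma~\ref{lem:GS}, or directly: the dictator always obtains her top alternative, and no other agent can affect the outcome.
\item \emph{WCSP.} This follows from strategy-proofness and Proposition~\ref{prop:SP_WCSP}.
\item \emph{Onto.} For any $x\in X$, choose a profile in which $t_1(P_i)=x$; then $f(P)=x$. Equivalently, it is the ``only if'' part of Lemma~\ref{lem:GS}.
\end{enumerate}

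The only point needing care is the precise use of Lemma~\ref{lem:GS} in the reverse direction. One should not rely on that lemma's ``if'' part to obtain ontoness without checking it. For this reason I prefer to verify ontoness and strategy-proofness of a dictatorial rule directly, as in the third and fourth items, since each check is immediate.

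Otherwise there is no real obstacle. The proof is a chain of the earlier results, with Proposition~\ref{prop:tops} doing the substantive work of upgrading WCSP to strategy-proofness under the tops-only assumption.
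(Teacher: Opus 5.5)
Your proposal is correct and follows the same route as the paper, whose proof combines Lemma~\ref{lem:GS}, Lemma~\ref{lem:dic_tops} and Proposition~\ref{prop:tops}; your appeal to Proposition~\ref{prop:SP_WCSP} for WCSP of a dictatorial rule is exactly what the reverse direction of Proposition~\ref{prop:tops} relies on. One small labelling slip: ontoness of a dictatorial rule comes from the ``if'' part of Lemma~\ref{lem:GS}, not the ``only if'' part, but your direct verification makes this immaterial.
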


\begin{proof}
It follows from Lemmas \ref{lem:GS}, \ref{lem:dic_tops} and Proposition \ref{prop:tops}.
\end{proof}

Proposition \ref{prop:GS2} states an impossibility result for tops-only and onto voting rules to be WCSP.\footnote{Note that some tops-only voting rules, such as the median-voter rule \citep{moulin1980strategy}, impose restrictions on preferences and thus violate strategy-proofness and WCSP in our model.} However, since tops-only voting rules are not common, the result is not restrictive in practice. For example, the plurality rule is usually not tops-only (for details, see Section 3).

\section{Main results}
In this section, we provide the main results of this paper. We examine WCSP for standard voting rules. As discussed in Section 2, we assume that $m\ge3$ and work under this assumption throughout the section.

\begin{ass}
There exist at least three alternatives.
\end{ass} 

Now, we define a scoring rule, then introduce standard scoring rules and a scoring rule with fixed-order tie-breaking.
Let $r(P_i,x)\in\{1,\ldots, m\}$ be the \emph{rank} (from top) of $x\in X$ according to $P_i$; that is, $r(P_i,x) = \big| \{y\in X \mid y \mathrel{R_i} x \} \big|$. 
Let $\succ\:\in \mathcal{P}$ be the \emph{fixed-order} on $X$; that is, $x_1 \succ x_2 \succ \cdots \succ x_m$.
A \emph{score vector} is $s =(s(1),\ldots,s(m)) \in \mathbb{R}^m$ such that 
\[s(1)>s(m) \quad \text{and}\quad s(1) \geq s(2) \geq \cdots \geq s(m).\]
Given any scoring vector $s=(s(1),\ldots,s(m))$, let $S(P,x)$ be the \emph{total score} of $x\in X $ for $P\in\mathcal{P}^n$; that is,
\[S(P,x)= \sum_{i\in N} s(r(P_i,x)).\]

\begin{dfn}
A voting rule $ f : \mathcal{P}^n \to X $ is a \textbf{scoring rule} if there exists a score vector $s=(s(1),\ldots,s(m))$ such that for each $P\in \mathcal{P}^n$ and each $x\in X$,
\[S(P,f(P)) \geq S(P,x).\]
\end{dfn}

\begin{dfn}
A scoring rule $f : \mathcal{P}^n \to X$ associated with $s=(s(1),\ldots,s(m))$ is defined as follows:
\begin{itemize}
\item a \textbf{plurality rule} if $ s(1)> s(2) = \cdots = s(m) $,
\item an \textbf{anti-plurality rule} if $ s(1)= \cdots = s(m-1)>s(m) $,
\item a \textbf{Borda rule} if $ s(1) >s(2) $ and $s(1) -s(2)=s(2)-s(3)=\cdots=s(m-1)-s(m) $,
\item a \textbf{Dowdall rule} if there exist $\alpha\in\mathbb{R}_{++} $ and $\beta\in \mathbb{R}$ such that $s(k)=\frac{1}{k}\alpha+\beta$ for each $ k \in \{1, \ldots, m\}$.
\end{itemize}
\end{dfn}

\begin{dfn}
A scoring rule $ f : \mathcal{P}^n \to X $ associated with $s=(s(1),\ldots,s(m))$ is said to be \textbf{with fixed-order tie-breaking} (simply, \textbf{with $\bm{\succ}$}) if for each $P\in \mathcal{P}^n$ and each $x\in X\setminus \{f(P)\}$ such that $S(P,f(P)) = S(P,x)$,
\[f(P) \succ x.\]
\end{dfn}

\subsection{Anti plurality rules}
First, we examine WCSP for anti-plurality rules. Theorem \ref{thm:apv} below gives the necessary and sufficient conditions for anti-plurality rules with $\succ$ to be WCSP.

\begin{thm}\label{thm:apv}
An anti-plurality rule with $\succ$ is WCSP if and only if $n \geq 2m-1$.
\end{thm}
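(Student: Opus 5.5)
The plan rests on one observation: under the anti-plurality rule each agent in effect \emph{vetoes} her bottom alternative. With fixed-order tie-breaking, the winner is the alternative with the fewest vetoes, and ties go to the $\succ$-earliest one. I will write $v(P,x)$ for the number of agents ranking $x$ last at $P$. The proof then reduces to a single question: when can the other $n-1$ agents make agent $i$'s bottom alternative $t_m(P_i)$ win? The answer will be that they always can if and only if $n\ge 2m-1$.

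\emph{Sufficiency.} Assume $n\geq 2m-1$ and fix $i$ and $P_i$, with $z=t_m(P_i)$. I first show that some $P_{-i}$ makes $z$ win. Agent $i$ already vetoes $z$. Let the other agents place $z$ at the top, and let each of the remaining $m-1$ alternatives be ranked last by exactly two of them. This uses $2(m-1)\le n-1$ agents, and any further agents also veto some alternative other than $z$. Then $v(P,z)=1<2\le v(P,x)$ for all $x\neq z$, so $z$ wins whatever its position in $\succ$. Hence every worst-case scenario $\tilde P_{-i}$ for $(P_i,f(P_i,\cdot))$ yields $f(P_i,\tilde P_{-i})=z$. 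Now take any $P'_i$ and compare $(P'_i,\tilde P_{-i})$ with $(P_i,\tilde P_{-i})$. If $t_m(P'_i)=z$, the veto counts are unchanged. Otherwise $z$ loses one veto and some $x'$ gains one, so $z$'s count weakly falls while every other count weakly rises. Since $z$ won before, it still has the fewest vetoes and still wins every tie it won. So $f(P'_i,\tilde P_{-i})=z$ and WCSP holds; note it holds with equality.

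\emph{Necessity.} Assume $n\leq 2m-2$. Take $P_i$ with $t_m(P_i)=x_m$, $t_{m-1}(P_i)=x_1$, and $x_2$ ranked above $x_1$; this is possible since $m\ge3$. First I show that $x_m$ cannot win. Agent $i$ gives $x_m$ one veto, and $x_m$ loses every tie because it is last in $\succ$. So $x_m$ wins only if each of the other $m-1$ alternatives has at least two vetoes. That requires $2(m-1)$ vetoes from the others, but only $n-1<2(m-1)$ are available. Next I show that $x_1$ can win. Let all other agents rank $x_m$ last. Then $x_1$ has zero vetoes and wins the tie-break, so this $\tilde P_{-i}$ is worst for $(P_i,f(P_i,\cdot))$ with outcome $x_1$. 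Finally, let $P'_i$ rank $x_1$ last instead. The counts become: $x_1$ has one veto, $x_m$ has $n-1\geq 1$, and every other alternative has zero. The winner is therefore $x_2$, and $x_2\mathrel{P_i}x_1$, which violates WCSP.

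I expect the main obstacle to be the counting in the first step of each direction, namely getting the threshold $2(m-1)$ exactly right. It comes from the tie-breaking disadvantage of $x_m$ together with the veto agent $i$ herself casts, and this is what pins down the bound $n\ge 2m-1$. I will also check the degenerate case of the construction, $m=3$: there $x_2$ is the only alternative besides $x_1$ and $x_m$, and the argument still goes through.
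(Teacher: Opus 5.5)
Your proposal is correct and follows essentially the same route as the paper. For sufficiency, you show that agent $i$'s bottom alternative is attainable when $n-1\ge 2(m-1)$ and that any misreport only strengthens it. For necessity, you use the same construction as the paper: the profile with $x_m$ last and $x_1$ second-to-last, the pigeonhole argument that $x_m$ can never win, the worst case in which all other agents veto $x_m$, and the manipulation that vetoes $x_1$ to elect $x_2$. Your veto-count formulation is the paper's normalized score $S(P,x)=n-v(P,x)$ restated.
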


\begin{proof}
Let $f : \mathcal{P}^n \to X$ be an anti-plurality rule with $\succ$. Without loss of generality, assume that the score vector is given by $ s(1) = \cdots = s(m-1)=1$ and $s(m)=0$.\\
$(\Longrightarrow)$ We prove the contrapositive: if $n<2m-1$, then no anti-plurality rule with $\succ$ is WCSP. Suppose that $n<2m-1$. Let $P_i\in \mathcal{P}$ be such that $t_1(P_i)=x_2$, $t_{m-1}(P_i)=x_1$, and $t_m(P_i)=x_m$.\\\\
\textbf{Step 1. Showing $\bm{f(P)\neq t_m(P_i)}$ for each $\bm{P_{-i}}$:} \\
Take any $ P_{-i}\in\mathcal{P}^{n-1}$. Since $n<2m-1$ and $n,m$ are integers, $n+2\leq 2m $ and $(n-1)+3\le 2m.$ Then, there exist at least two distinct $a,b\in X$ such that
\begin{align*}
& \big|\{ j \in N\setminus\{i\} \mid t_m(P_j)=a \}\big|\leq 1 \quad \text{and}\\ 
& \big|\{ j \in N\setminus\{i\} \mid t_m(P_j)=b \}\big|\leq 1.
\end{align*}
Since $a,b$ are distinct, there exists $a'\in \{a,b\}$ with $a'\neq x_m$. Since $t_m(P_i)=x_m,$ 
\[\big|\{ j \in N\mid t_m(P_j)=a' \}\big|\leq 1.\]
Then,
\[S(P,a')\geq n-1.\]
Since $t_m(P_i)=x_m$,
\[S(P,x_m)\leq n-1.\]
Thus,
\[S(P,a')\geq S(P,x_m).\]
Since $a'\succ x_m$ and $f$ is with $\succ$,
\[f(P)\neq x_m.\]
Therefore, for each $P_{-i}\in\mathcal{P}^{n-1}$, $f(P)\neq t_m(P_i)$.
\\\\
\textbf{Step 2. Finding $\bm{\tilde P_{-i}}$:}\\
Let $\tilde P \in \mathcal{P}^{n}$ be such that for each $j\in N$, $t_m(\tilde P_j)=x_m$. Then, for each $x \in X \setminus \{x_m\}$,
\begin{align*}
S(P_i,\tilde P_{-i},x)&=n \quad \text{and}\\
S(P_i,\tilde P_{-i},x_m)&=0.
\end{align*}
Since $f$ is with $\succ$, by the definition of $\succ$, 
\[f(P_i,\tilde P_{-i})=x_1.\]
Since $x_1=t_{m-1}(P_i), $ $\tilde P_{-i}$ is worst for $(P_i,f(P_i,\cdot))$.\\\\
\textbf{Step 3. Completing the proof:}\\
Let $P' _{i} \in \mathcal{P}$ be such that $t_m(P'_i)=x_1$. Then, for each $x\in X\setminus\{x_1,x_m\}$,
\begin{align*}
S(P'_i,\tilde P_{-i},x)&=n,\\
S(P'_i,\tilde P_{-i},x_1)&=n-1,\quad\text{and}\\
S(P'_i,\tilde P_{-i},x_m)&=1.
\end{align*}
Then, by the definition of $\succ$,
\[f(P'_i,\tilde P_{-i})=x_2.\]
Thus,
\[f( P'_i , \tilde P_{-i})=x_2 \mathrel{P_i} x_1= f(P_i,\tilde P_{-i}).\]
Therefore, $f$ is not WCSP.\\\\
$(\Longleftarrow)$ We shall show that if $n\geq 2m-1$, then any anti-plurality rule with $\succ$ is WCSP. Assume that $n\geq 2m-1$. Take any $P_i\in \mathcal{P}$. Let $a=t_m(P_i)$.
\\\\
\textbf{Step 1. Showing the existence of $\bm{P_{-i}}$ such that $\bm{f(P)=t_m(P_i)}$:} \\
By assumption, $n-1\geq2(m-1)$.
Since $\big|N\setminus\{i\}\big|=n-1$ and $\big| X \setminus\{a\}\big|=m-1$, there exists $ P_{-i}\in\mathcal{P}^{n-1}$ such that for each $j\in N\setminus\{i\}$ and each $x\in X \setminus\{a\}$,
\[t_m(P_j)\neq a \quad \text{and} \quad \big|\{ j \mid t_m(P_j)=x \}\big|\geq 2.\]
Then, for each $x\in X \setminus\{a\}$,
\begin{align*}
&S(P, x) \leq n-2 \quad\text{and}\\
&S(P,a)=n-1. 
\end{align*}
Hence, $f(P) = a = t_m(P_i)$.\\\\
\textbf{Step 2. Completing the proof:}\\
Take any $\tilde P_{-i} \in \mathcal{P}^{n-1}$ that is worst for $(P_i,f(P_i,\cdot))$. By Step 1,
\[f(P_i, \tilde P_{-i})=a.\]
Take any $P'_i\in \mathcal{P}$. We shall prove $f(P'_i, \tilde P_{-i})=a$ by considering following two cases.
\begin{enumerate}
\item Assume that $t_m(P'_i)=a$. Then $f(P_i, \tilde P_{-i})= f(P'_i, \tilde P_{-i})=a$.
\item Assume that $t_m(P'_i)\neq a$. Take any $x\in X \setminus\{a\}$.
Then,
\begin{align*}
&S(P'_i, \tilde P_{-i}, a) > S(P_i, \tilde P_{-i}, a) \quad \text{and}\\
&S(P'_i, \tilde P_{-i}, x) \leq S(P_i, \tilde P_{-i}, x).
\end{align*}
By the definition of $f$, $S(P_i, \tilde P_{-i}, a) \geq S(P_i, \tilde P_{-i}, x)$. Then, we have $S(P'_i, \tilde P_{-i}, a) > S(P'_i, \tilde P_{-i}, x)$. Thus, $f(P'_i, \tilde P_{-i}) =a$.
\end{enumerate}
Hence, \[f(P_i, \tilde P_{-i})=a \mathrel{R_i}a=f(P'_i, \tilde P_{-i}).\]
Therefore, $f$ is WCSP.
\end{proof}

By Theorem \ref{thm:apv}, considering practical voting situations, the condition for anti-plurality rules to be WCSP is likely to be easily satisfied. For example, if there are five alternatives, nine agents should be enough.

Now, we show a result that overcomes the incompatibility between non-manipulability and ontoness, highlighted by the Gibbard–Satterthwaite theorem (Lemma \ref{lem:GS}).
First, we consider the relation between anti-plurality rules and ontoness; a condition for a reasonable voting rule.\footnote{Incidentally, there is a stronger condition than ontoness known as \emph{Pareto efficiency}, and it is well known that anti-plurality rules with $\succ$ unfortunately violate this condition.}

\begin{lem}\label{lem:apv_onto}
If $n\ge m-1$, an anti-plurality rule is onto.
Furthermore, an anti-plurality rule with $\succ$ is onto if and only if $n\ge m-1$.
\end{lem}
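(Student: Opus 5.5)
We prove the two parts by explicit constructions with the normalized score vector $s(1)=\cdots=s(m-1)=1$, $s(m)=0$, as in the proof of Theorem \ref{thm:apv}. Then $S(P,x)=n-\big|\{j\in N\mid t_m(P_j)=x\}\big|$, so the winners are exactly the alternatives ranked last by the fewest agents. Every question below is therefore a counting question about how the $n$ last-ranked positions are distributed among the $m$ alternatives.

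\textbf{First claim (any anti-plurality rule is onto when $n\ge m-1$).} Fix $x\in X$ and list the other $m-1$ alternatives. Since $n\ge m-1$, we can assign them as bottoms to $m-1$ distinct agents, one alternative to each. Every remaining agent gets some alternative $y\neq x$ as bottom, say the first one in the list. No agent ranks $x$ last, so $S(P,x)=n$. Each $y\neq x$ is ranked last by at least one agent, so $S(P,y)\le n-1$. Hence $x$ is the unique maximizer of the total score. Any scoring rule with this score vector, whatever its tie-breaking, must then select $x$. This proves ontoness.

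\textbf{Second claim (with $\succ$, ontoness requires $n\ge m-1$).} Sufficiency is the special case of the first claim. For necessity, suppose $n\le m-2$ and show that $x_m$ is never selected. Fix any profile $P$. The $n$ agents rank at most $n\le m-2$ distinct alternatives last, so at least two alternatives are ranked last by nobody. At least one of them, call it $a$, differs from $x_m$. Then $S(P,a)=n\ge S(P,x_m)$. Since $a\succ x_m$ and $f$ is with $\succ$, we get $f(P)\neq x_m$. So $f$ is not onto. This is the same pigeonhole argument as in Step 1 of the ($\Longrightarrow$) direction of Theorem \ref{thm:apv}, applied here to all agents.

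The only delicate point is the first claim without a specified tie-breaking rule. There we must ensure that $x$ is the \emph{strict} unique maximizer, so that tie-breaking never matters. The construction above guarantees this because every $y\neq x$ receives at least one last-place ranking. No other real obstacle arises.
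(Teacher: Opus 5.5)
Your proof is correct and follows the paper's argument essentially step for step. For ontoness you use the same construction (every $y\neq x$ is ranked last by someone and $x$ by no one), and for necessity the same pigeonhole argument: two alternatives are ranked last by nobody, and the one other than $x_m$ beats $x_m$ under $\succ$. Your explicit remark that $x$ is the \emph{strict} unique maximizer, so that no tie-breaking is needed for a general anti-plurality rule, is slightly more careful than the paper's displayed $S(P,x)=n\ge n-1=S(P,y)$.
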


\begin{proof}
First, let $f:\mathcal{P}^n \to X$ be an anti-plurality rule. Suppose that $n\ge m-1$. Take any $x\in X$. Since $n\ge m-1$, we can construct $P\in\mathcal{P}^n$ such that for each $y\in X\setminus\{x\}$, $\big|\{ i\in N\mid t_m(P_i)=y\}\big|\ge 1$ and $\big|\{ i\in N\mid t_m(P_i)=x\}\big|= 0$. Then, for each $y\in X\setminus\{x\}$, $S(P,x)=n\ge n-1=S(P,y)$. Hence, $f(P)=x$. Since $x$ was arbitrary, $f$ is onto.\\\\
Next, let $f:\mathcal{P}^n\to X$ be an anti-plurality rule with $\succ$.\\
($\Longrightarrow$) Assume that $n<m-1$. We show that $f$ is not onto. Take any $P\in\mathcal{P}^n$. Since $n<m-1$, there exist at least two distinct $a,b\in X$ such that $\big|\{ i\in N\mid t_m(P_i)=a\}\big|=\big|\{ i\in N\mid t_m(P_i)=b\}\big|= 0$. Hence, for each $x\in X$,
$S(P,a)=S(P,b)\ge S(P,x)$. Without loss of generality, suppose that $a\succ b$. Then, $a\succ x_m$. Since $f$ is with $\succ$, $f(P)\ne x_m$. Therefore, $f$ is not onto.\\
($\Longleftarrow$)This follows from the first part of the proof.
\end{proof}

According to Lemma \ref{lem:apv_onto}, anti-plurality rules with $\succ$ are onto when $n \geq m-1$. Then, together with Theorem \ref{thm:apv}, the following proposition immediately follows. 

\begin{prop}\label{prop:WCSP_onto}
An anti-plurality rule with $\succ$ is WCSP and onto if and only if $n \geq 2m-1$.
\end{prop}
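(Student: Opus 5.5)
The plan is to combine Theorem \ref{thm:apv} with Lemma \ref{lem:apv_onto}, treating the two directions separately. The only nontrivial observation is the comparison of thresholds: since $m\ge3$ (indeed already for $m\ge 0$), we have $2m-1\ge m-1$. Hence $n\ge 2m-1$ forces $n\ge m-1$.

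For the ``if'' direction, assume $n\ge 2m-1$. Theorem \ref{thm:apv} gives that the anti-plurality rule with $\succ$ is WCSP. Since $n\ge 2m-1\ge m-1$, Lemma \ref{lem:apv_onto} gives that it is onto.

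For the ``only if'' direction, assume the anti-plurality rule with $\succ$ is WCSP and onto. The WCSP part alone yields $n\ge 2m-1$ by the necessity half of Theorem \ref{thm:apv}; ontoness is not even needed here.

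I do not expect any real obstacle. The one point to state explicitly is that the ontoness threshold $m-1$ is dominated by the WCSP threshold $2m-1$. This is why the joint condition reduces to $n\ge 2m-1$, rather than to something like $n\ge\max\{2m-1,m-1\}$ left unsimplified.
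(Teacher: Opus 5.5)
Your proposal is correct and matches the paper's own argument: the paper also obtains this proposition immediately by combining Theorem \ref{thm:apv} (for both directions of the WCSP threshold) with Lemma \ref{lem:apv_onto} (for ontoness). Your explicit observation that $n\ge 2m-1$ implies $n\ge m-1$ is the step the paper leaves implicit.
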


Proposition \ref{prop:WCSP_onto} may be viewed as partially overcoming the incompatibility highlighted by the Gibbard–Satterthwaite theorem (Lemma \ref{lem:GS}).

\subsection{Plurality rules}
Next, we examine WCSP for plurality rules. Since plurality rules with $\succ$ are tops-only, onto, and not dictatorial, the following immediately follows from Proposition \ref{prop:GS2}.

\begin{rem}
No plurality rule with $\succ$ is WCSP.
\end{rem}

However, note that plurality rules are not always tops-only. It depends on the tie-breaking procedure. We now consider every voting rule within the class of plurality rules. Theorem \ref{thm:pv} below demonstrates a impossibility result for plurality rules to be WCSP.

\begin{thm}\label{thm:pv}
For all $m\ge3$, a plurality rule is WCSP if and only if it is dictatorial and $n=2$.
\end{thm}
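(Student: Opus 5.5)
The plan is to prove sufficiency directly and to prove necessity in two parts: rule out $n\ge 3$, then handle $n=2$.

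\textbf{Sufficiency.} Let $n=2$ and let $f$ be dictatorial with dictator $i$. Then $f$ is a plurality rule. If the two tops coincide, that alternative has the unique maximal score. If they differ, both tops score $s(1)+s(m)$ and every other alternative scores $2s(m)$, so $t_1(P_i)$ is a maximizer. A dictatorial rule is strategy-proof, so by Proposition \ref{prop:SP_WCSP} it is WCSP.

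\textbf{Necessity for $n\ge 3$.} Let $f$ be a plurality rule with arbitrary tie-breaking and $n\ge3$. I want a preference $P_i$ and a profile $\tilde P_{-i}$ such that:
\begin{itemize}
\item $f(P_i,\tilde P_{-i})=t_m(P_i)$, which is automatically a worst outcome;
\item $t_m(P_i)$ wins only through a tie that agent $i$ can break by switching her top.
\end{itemize}
Since $n\ge 3$, the worst case is indeed $t_m(P_i)$: if every other agent ranks $t_m(P_i)$ first, it gets $n-1\ge 2$ top votes and wins outright.

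The construction is as follows. Fix two alternatives $z,b$ and a third alternative $a$. Split the other $n-1$ agents so that:
\begin{itemize}
\item $z$ and $b$ tie for the most top votes;
\item $a$, together with agent $i$'s vote, stays strictly below them, or joins a three-way tie when $n=3$;
\item any odd remainder goes to some other alternative.
\end{itemize}
With $i$ ranking $a$ first, $f$ must select from the tied set, say $w\in\{z,b\}$ (or from $\{z,b,a\}$ when $n=3$). If the output is $t_m(P_i)$, agent $i$ can deviate to a report whose top is the other tied alternative. That alternative then wins strictly, which she strictly prefers. This violates WCSP.

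\textbf{Main obstacle.} Tie-breaking is arbitrary and may depend on the lower parts of $P_i$ and $P_{-i}$, so I cannot simply choose $P_i$'s bottom to match $w$. I would first fix $P_{-i}$ and look at the two preferences $P_i^z,P_i^b$ with top $a$ and bottoms $z$ and $b$.
\begin{itemize}
\item If one of them yields its own bottom, we are done.
\item Otherwise $f(P_i^z,P_{-i})=b$ and $f(P_i^b,P_{-i})=z$. I would then exploit the freedom in the other agents' lower rankings, and the option of a three-way tie including $a$, to force a profile where the selected alternative is $i$'s bottom.
\end{itemize}
For example, I would use the worst-case profile itself: take any $\tilde P_{-i}$ with $f(P_i,\tilde P_{-i})=t_m(P_i)$ and check whether the winner's margin there is at most one vote. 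The odd and even cases of $n-1$, and the case $n=3$, would be handled separately.

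\textbf{Necessity for $n=2$.} Let $n=2$ and $f$ be a WCSP plurality rule. Whenever the two tops differ, the outcome is one of the two tops.
\begin{itemize}
\item First I would show that $f$ is tops-only. Take a profile where $i$'s top loses, and let $P_i$ rank the winner last; this is a worst case. If changing only lower ranks changed the winner, a WCSP violation would arise by the same style of argument as in Proposition \ref{prop:tops}.
\item Next I would show strategy-proofness via Proposition \ref{prop:tops}.
\end{itemize}
Then $f$ is onto, since unanimous tops win. Lemma \ref{lem:GS} (with $m\ge3$) gives dictatorship.

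If establishing tops-onlyness directly is hard, I would instead argue head-on. Suppose the outcome on distinct-top profiles is not always agent $1$'s top, nor always agent $2$'s. Then find profiles where some agent $i$'s top loses and her bottom is the winner, so the profile is a worst case. Switching $i$'s top to a third alternative then yields a two-way tie that the rule must resolve, and I would show that some such switch strictly improves her outcome.
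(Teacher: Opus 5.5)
Your sufficiency argument is fine, but neither half of necessity is actually proved. For $n\ge3$, you fix agent $i$ with top $a$ in advance and want $t_m(P_i)$ to win through a tie that she then breaks strictly. This does not survive arbitrary tie-breaking, and you flag the problem without resolving it. Take a plurality rule whose tie-breaker, whenever exactly one voter's top lies outside the tied pair, picks a tied alternative that this voter does not rank last. It defeats your odd-$n\ge5$ construction for every choice of $P_i$ and of the lower rankings in $P_{-i}$, so exploiting that freedom cannot help.

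For $n=4$, $m=3$ your mechanism is impossible outright (and your odd remainder has no fourth alternative to go to). The only vote split with a tie at the top is $(2,2,0)$. An agent whose bottom is the selected $w$ must vote for the other tied alternative, and every switch leaves $w$ the strict winner. So any violation there must start from a strict win of $t_m(P_i)$ and reach a tie by deviating.

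The missing idea is to fix a tied profile $P$, case-split on the single value $f(P)$, and only then choose the manipulating agent and her true preference.
\begin{itemize}
\item With a cyclic three-way tie, every possible winner is some agent's bottom, and that agent breaks the tie toward her second choice (the paper's case $n=3k+3$).
\item Otherwise, $P$ serves as the deviation target. In the case $n=3k+4$, $b$ and $c$ are tied at $P$. If $f(P)=b$, agent $1$ with true preference ranking $a$ first and $c$ last sees $c$ win strictly at $P_{-1}$, yet reporting top $b$ yields $b$. If $f(P)=c$, agent $3$ does the same symmetrically.
\item The case $n=3k+5$ needs a two-level split on $f(P)$ and $f(P'_2,P_{-2})$.
\end{itemize}

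For $n=2$, your tops-only step is circular. Moving the winner to the bottom of $P_i$ may change the winner, which is exactly what tops-onlyness would exclude. Proposition \ref{prop:tops} invokes tops-onlyness precisely to transfer the worst case, so it cannot be used to establish it. What you need is the content of Lemma \ref{lem:pv_n2_WCSP}.
\begin{itemize}
\item Take an antipodal profile $P^\star$: agent $1$ ranks $a$ first and $b$ last, agent $2$ ranks $b$ first and $a$ last. The winner is the other agent's bottom, so WCSP forces the winning report to win against every report of the other agent.
\item Now let $P_1$ have top $a$ and win once against a report with a different top. Rule out $t_m(P_1),t_{m-1}(P_1),\ldots$ one at a time: each would be a worst case from which reporting $P^\star_1$ escapes to $a$. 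This gives $f(P_1,\cdot)\equiv a$.
\item The branch $f(P^\star)=b$ propagates symmetrically and contradicts $f(P)=a$.
\end{itemize}
This yields tops-onlyness. After that, your route through Proposition \ref{prop:tops}, ontoness and Lemma \ref{lem:GS} is a legitimate alternative to the paper's direct finish in Lemma \ref{lem:pv_n2_WCSP_dic}.
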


In Theorem \ref{thm:pv}, for clarity, we separate the cases of being dictatorial and $n=2$; however, note that plurality rules can be dictatorial only when $n=2$.

\begin{lem}
If there exists a plurality rule that is dictatorial, then $n=2$.
\end{lem}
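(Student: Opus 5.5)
I will argue by contrapositive: assume $n\ge 3$ and that $f$ is a plurality rule dictated by some agent $d$, so $f(P)=t_1(P_d)$ for every $P\in\mathcal{P}^n$, and derive a contradiction with the scoring property. Without loss of generality, normalize the score vector to $s(1)=1$ and $s(2)=\cdots=s(m)=0$, so that $S(P,x)$ is simply the number of agents whose top alternative is $x$. Being a plurality rule only requires $S(P,f(P))\ge S(P,x)$ for each $x$. In other words, $f(P)$ must always be among the alternatives with the most first-place votes, whatever the tie-breaking.

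Since $m\ge 3$, I can pick distinct $a,b\in X$. I will build a profile $P$ in which $t_1(P_d)=a$ and every agent $j\ne d$ has $t_1(P_j)=b$. The number of such agents is $n-1\ge 2$. Then $S(P,b)=n-1\ge 2>1=S(P,a)$. Dictatorship, however, forces $f(P)=a$, which violates $S(P,f(P))\ge S(P,b)$. Hence no dictatorial plurality rule exists when $n\ge 3$. Since $n\ge 2$ is assumed in the model, it follows that $n=2$.

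There is no real obstacle here. The only point needing care is that the margin is strict: it requires $n-1\ge 2$, which is exactly where $n\ge 3$ enters. When $n=2$ the two alternatives tie with one vote each, and that is why dictatorship is possible there through tie-breaking in favor of agent $d$. The assumption $m\ge 3$ is not even needed, since two alternatives suffice for the construction.
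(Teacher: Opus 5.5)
Your proof is correct and is essentially the paper's argument: assume $n\ge 3$, let the dictator top one alternative and everyone else top another, so the latter has strictly more first-place votes ($n-1\ge 2>1$), contradicting dictatorship. Your side remarks are also accurate: normalizing the score vector is harmless, and $m\ge 3$ is not needed.
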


\begin{proof}
Let $f : \mathcal{P}^n \to X$ be a dictatorial plurality rule. Without loss of generality, assume that for each $P\in\mathcal{P}^n$, $f(P)=t_1(P_i)$. Suppose not; that is, $n\ge 3$. Take any two distinct $x,y \in X.$ Let $P'\in\mathcal{P}^n$ be such that $t_1(P'_i)=x$ and $t_1(P'_j)=y$ for each $j\in N\setminus\{i\}$. Since $n\ge3$, $S(P',x)= 1$ and $S(P',y)\ge 2$. Since $f$ is a plurality rule, $f(P')=y$. Hence, $f(P')\neq x =t_1(P'_i)$ and which contradicts that $f$ is dictatorial. 
Thus, if a plurality rule is dictatorial, then $n=2$.
\end{proof}

The proof of Theorem \ref{thm:pv} is in a series of lemmas.

\begin{lem}\label{lem:pv_n2_nondic}
Suppose $n=2$. If a plurality rule $f:\mathcal{P}^2\to X$ is not dictatorial, then there exist distinct $P,P',P''\in \mathcal{P}^2$ and distinct $i,j\in N$ such that $f(P)=t_1(P_i)\neq t_1(P_j)$, $f(P')=t_1(P'_j)\neq t_1(P'_i)$, $f(P'')=t_1(P''_i)\neq t_1(P''_j)$, and $f(P'')\notin \{f(P),f(P')\}$.
\end{lem}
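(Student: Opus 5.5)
The plan is to reduce the lemma to a statement about which alternatives each agent can win with, and then settle it by a short counting argument.

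First I will record the structure of a plurality rule when $n=2$. At any $P\in\mathcal{P}^2$ with $t_1(P_1)\neq t_1(P_2)$, the two tops have total score $s(1)+s(m)$. Every other alternative has score $2s(m)$, which is strictly smaller because $s(1)>s(2)=\cdots=s(m)$. So $f(P)\in\{t_1(P_1),t_1(P_2)\}$. Call such a profile a \emph{disagreement profile}. For $k\in\{1,2\}$, let $A_k$ be the set of $x\in X$ such that some disagreement profile $P$ has $f(P)=t_1(P_k)=x$. The lemma then asks for distinct agents $i,j$ and alternatives $x,z\in A_i$, $y\in A_j$ with $z\notin\{x,y\}$. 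Given these, $P$, $P'$ and $P''$ are the witnessing disagreement profiles. They are automatically distinct profiles, since $f(P'')\notin\{f(P),f(P')\}$ and $P,P'$ have winners belonging to different agents' tops.

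The key steps are as follows.

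\emph{Step 1.} Both $A_1$ and $A_2$ are nonempty. Since agent $1$ is not a dictator, there is $P$ with $f(P)\neq t_1(P_1)$. Such a $P$ must be a disagreement profile, because agreeing tops win with score $2s(1)$. By the structure above, $f(P)=t_1(P_2)$, so $f(P)\in A_2$. Symmetrically $A_1\neq\emptyset$.

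\emph{Step 2.} Some $A_i$ has at least two elements. For distinct $z,w$, any disagreement profile with tops $(z,w)$ shows that $z\in A_1$ or $w\in A_2$. If $A_1=X$, then $|A_1|=m\ge3$. Otherwise pick $z\notin A_1$; then every $w\neq z$ lies in $A_2$, so $|A_2|\ge m-1\ge2$.

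\emph{Step 3.} Take $i$ with $|A_i|\ge2$ and let $j$ be the other agent. Pick any $y\in A_j$, which exists by Step 1.
\begin{enumerate}
\item If $y\in A_i$, set $x=y$ and let $z$ be any other element of $A_i$.
\item If $y\notin A_i$, let $x,z$ be two distinct elements of $A_i$.
\end{enumerate}
In either case $z\notin\{x,y\}$. Note that $x=y$ is allowed, since the lemma does not require $f(P)\neq f(P')$.

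The main obstacle is realizing that $P$ and $P'$ must be chosen together with $P''$, rather than taken as fixed witnesses of non-dictatorship. With fixed witnesses, the case $m=3$ produces rules where every disagreement winner lies in $\{f(P),f(P')\}$. The sets $A_1,A_2$ and the freedom to take $x=y$ resolve this.

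The remaining routine check is that for each prescribed pair of distinct tops and each agent, the chosen winner is actually realized by some profile. This holds by the definition of $A_k$, since each element of $A_k$ comes with a witnessing profile.
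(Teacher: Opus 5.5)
Your proof is correct, and it takes a genuinely different and more careful route than the paper's.

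The paper fixes the two non-dictatorship witnesses first: a disagreement profile $P$ won by agent 1's top and a disagreement profile $P'$ won by agent 2's top. It then simply asserts that, because $m\ge3$, some disagreement profile $P''$ has a winner outside $\{f(P),f(P')\}$. That assertion holds when $f(P)=f(P')$ or $m\ge4$: just give the two agents distinct tops outside $\{f(P),f(P')\}$. For $m=3$, however, it can fail for admissible witnesses. Take the plurality rule with $\succ$, and the profiles whose tops (agent 1, agent 2) are $(x_1,x_2)$ and $(x_3,x_2)$. These are valid witnesses with winners $x_1$ and $x_2$, yet $x_3$ wins at no disagreement profile. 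This is exactly the obstacle you flagged.

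Your argument closes that gap by choosing all three profiles jointly. It uses the winning sets $A_1,A_2$, the counting bound $\max\{|A_1|,|A_2|\}\ge2$ (the only place $m\ge3$ is needed), and the explicit permission to take $f(P)=f(P')$. The price is a few extra lines of bookkeeping. In exchange you get an argument that works uniformly for every $m\ge3$ and every tie-breaking selection. It also checks the pairwise distinctness of $P,P',P''$ explicitly rather than leaving it implicit.
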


\begin{proof}
Assume that $n=2$. Let $f : \mathcal{P}^2\to X$ be a plurality rule and not dictatorial. Since $f$ is not dictatorial, by the contrapositive of the definition, for each $i\in N$, there exists $P^\star \in \mathcal{P}^2$ such that $f(P^\star)\neq t_1(P^\star_i)$. Since $n=2$, there exist $P,P' \in \mathcal{P}^2$ such that $f(P)\neq t_1(P_2)$ and $f(P')\neq t_1(P'_1)$. Since $f$ is a plurality rule, $f(P)=t_1(P_1)\neq t_1(P_2)$, $f(P')=t_1(P'_2)\neq t_1(P'_1)$, and $P\neq P'$. Since $m\ge3$ and $f$ is a plurality rule, there exist $P''\in \mathcal{P}$ and distinct $i,j\in N$ such that $f(P'')\notin \{f(P),f(P')\}$ and $f(P'')=t_1(P''_i)\neq t_1(P''_j)$.
\end{proof}

\begin{lem}\label{lem:pv_n2_WCSP}
Suppose $n=2$. If a plurality rule $f:\mathcal{P}^2\to X$ is WCSP, then for each distinct $i, j\in N$, each $P\in\mathcal{P}^2$ such that $f(P)=t_1(P_i)\neq t_1(P_j)$, and each $P'_j\in\mathcal{P}$, $f(P_i,P'_j)=t_1(P_i)$.
\end{lem}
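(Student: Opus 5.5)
We argue by contradiction, in the style of the proof of Proposition \ref{prop:tops}. Suppose $f$ is a WCSP plurality rule with $n=2$. Fix distinct $i,j$, a profile $P$ with $f(P)=t_1(P_i)\neq t_1(P_j)$, and some $P'_j$ with $f(P_i,P'_j)\neq t_1(P_i)$. Write $x=t_1(P_i)$ and $y=f(P_i,P'_j)$.

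First we pin down $y$. Under a plurality rule with $n=2$, the winner is always one of the two tops. The only exception would be a tie at zero, which cannot occur because each agent gives $s(1)>s(2)$ to her top. Hence $y=t_1(P'_j)$ and $y\neq x$. In words, at the profile $(P_i,P'_j)$ agent $j$ wins the tie, while at $P$ agent $i$ wins the tie.

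Next we build a preference for agent $j$ under which $(P_i)$ is a worst case. Let $P^\star_j$ be any preference with $t_1(P^\star_j)=t_1(P_j)$ and $t_m(P^\star_j)=x$. This is possible because $t_1(P_j)\neq x$. Whatever agent $i$ reports, the outcome of $f(P^\star_j,\cdot)$ is one of two tops, so it is either $t_1(P_i)$ or $t_1(P^\star_j)$. The key point is that plurality scores depend only on tops, although the tie-breaking may not. We therefore want the outcome at $(P_i,P^\star_j)$ to be $x$, the bottom of $P^\star_j$. If so, $P_i$ is worst for $(P^\star_j,f(P^\star_j,\cdot))$. Then WCSP requires $f(P_i,Q_j)=x$ for every report $Q_j$. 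Taking $Q_j=P'_j$ gives $y=x$, a contradiction.

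The main obstacle is that $f(P_i,P^\star_j)$ need not equal $f(P_i,P_j)$, because tie-breaking can depend on the full preferences. To handle this we reverse the roles and apply WCSP along a chain. Consider $f(P_i,P^\star_j)$.

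If it equals $x=t_m(P^\star_j)$, the argument above finishes the proof.

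If instead it equals $t_1(P_j)$, we use $P'_j$ directly. Let $P^{\star\star}_j$ have top $y$ and bottom $x$. This is possible since $y\neq x$.
\begin{itemize}
\item If $f(P_i,P^{\star\star}_j)=x$, it is worst for $j$, and WCSP forces $f(P_i,P'_j)=x$, a contradiction.
\item Otherwise $f(P_i,P^{\star\star}_j)=y$ and $f(P_i,P_j)=x$. Now view the profile $(P_i,P^{\star\star}_j)$ from agent $j$'s side with a preference $\hat P_j$ whose top is $t_1(P^{\star\star}_j)$ and whose bottom is $x$, chosen among candidates so that the outcome is $x$. If no such choice works, every top-$y$ report yields $y$ against $P_i$.
\end{itemize}

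I expect this step to require a careful case split on which reports of $j$ with fixed top produce which winner against $P_i$. The intended approach is to show that some report with bottom $x$ yields $x$, namely the report modelled on $P_j$, by moving only lower-ranked alternatives. We would then invoke WCSP at that report with the deviation $P'_j$.
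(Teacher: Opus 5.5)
Your proposal does not prove the lemma, because the case you leave open is the whole difficulty. The preliminary steps are correct. But every step you complete uses WCSP only for agent $j$, with agent $i$'s report frozen at $P_i$. It closes the argument only when some report of $j$ with $x$ at the bottom yields $x$ against $P_i$; your $P^\star_j$ and $P^{\star\star}_j$ sub-cases are both instances of this. The residual case is that every $Q_j$ with $t_m(Q_j)=x$ gives $f(P_i,Q_j)=t_1(Q_j)$, and it is not handled. Your suggested fix (reordering only lower-ranked alternatives of $P_j$ so that $x$ keeps winning) is not available. The tie-breaking of a plurality rule may depend arbitrarily on the full profile. So the definition gives no link between $f(P_i,P_j)$ and $f(P_i,\hat P_j)$ when $\hat P_j$ merely reorders non-top alternatives. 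Any such link must be derived from WCSP, and in fact from WCSP applied to agent $i$, which your argument never invokes.

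The missing idea is to change agent $i$'s report as well, building a polarized profile. Keep your $P^\star_j$ (top $b=t_1(P_j)$, bottom $x$), but pair it with $P^\star_i$ having top $x$ and bottom $b$. The winner at $P^\star$ is one of $x,b$, hence the bottom of the loser's report. So the loser faces a realized worst case, and WCSP for the loser shows that the winner's report secures its top against \emph{every} report of the loser. A second application of WCSP, now for the winner, transfers this guarantee to any report with the same top. If $f(P^\star)=x$, let $\tilde P_j$ be worst for $(P_i,f(P_i,\cdot))$. Then $f(P_i,\tilde P_j)\mathrel{R_i} f(P^\star_i,\tilde P_j)=x=t_1(P_i)$, so $f(P_i,Q_j)=x$ for all $Q_j$, in particular $f(P_i,P'_j)=x$. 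If $f(P^\star)=b$, the same two steps give $f(Q_i,P_j)=b$ for all $Q_i$, contradicting $f(P)=x$. This is the paper's proof. The paper performs the transfer rank by rank, from $t_m(P_i)$ upward, which is equivalent to this one-shot worst-case argument.
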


\begin{proof}
Assume that $n=2$. Let $f : \mathcal{P}^2\to X$ be a plurality rule and WCSP. Take any distinct $i,j\in N$ and $P\in\mathcal{P}^2$ such that $f(P)=t_1(P_i)\neq t_1(P_j)$.
Without loss of generality, assume that $i=1$ and $j=2$. Let $a=t_1(P_1)$ and $b=t_1(P_2)$. Take any $P^\star\in\mathcal{P}^2$ such that $t_1(P^\star_1)=a$, $t_m(P^\star_1)=b$, $t_1(P^\star_2)=b$, and $t_m(P^\star_2)=a$. Since $f$ is a plurality rule, $f(P^\star)\in\{a,b\}$.
\begin{enumerate}
\item Assume that $f(P^\star)=a$.
Then, $P^\star_1$ is worst for $(P^\star_2,f(P^\star_2,\cdot))$.
Since $f$ is WCSP, for each $P'_2\in\mathcal{P}$, $f(P^\star)\mathrel{R^\star_2}f(P^\star_1,P'_2)$ and $f(P^\star_1,P'_2)=a$.
\begin{itemize}
\item Take any $P^{m}_2\in\mathcal{P}$ with $t_1(P^{m}_2)=t_{m}(P_1)$. Then, $f(P_1,P^m_2)\in\{a,t_{m}(P_1)\}$. Assume that $f(P_1,P^m_2)=t_{m}(P_1)$. Then, $P^m_2$ is worst for $(P_1,f(P_1,\cdot))$. Since $f(P^\star_1,P^m_2)=a$, $f(P^\star_1,P^m_2)\mathrel{P_1}f(P_1,P^m_2)$. Thus, $f$ is not WCSP. It is a contradiction. Hence, $f(P_1,P^m_2)=a.$ Also, for each $P'_2\in\mathcal{P}$, $f(P_1,P'_2)\ne t_m(P_1).$
\item Take any $P^{m-1}_2\in\mathcal{P}$ with $t_1(P^{m-1}_2)=t_{m-1}(P_1)$. Then, $f(P_1,P^{m-1}_2)\in\{a,t_{m-1}(P_1)\}$. Assume that $f(P_1,P^{m-1}_2)=t_{m-1}(P_1)$. By the above, for each $P'_2\in\mathcal{P}$, $f(P_1,P'_2)\ne t_m(P_1).$ Then, $P^{m-1}_2$ is worst for $(P_1,f(P_1,\cdot))$. Similarly, it leads to a contradiction. Hence, $f(P_1,P^{m-1}_2)=a.$ Also, for each $P'_2\in\mathcal{P}$, $f(P_1,P'_2)\ne t_{m-1}(P_1).$
\item Take any $P^{k}_2\in\mathcal{P}$ with $t_1(P^{k}_2)=t_{k}(P_1)$ for each $k\in\{2,\ldots ,m-2\}$. Similarly, we have $f(P_1,P^k_2)=a$ and for each $P'_2\in\mathcal{P}$, $f(P_1,P'_2)\ne t_{k}(P_1).$
\item Take any $P^{1}_2\in\mathcal{P}$ with $t_1(P^1_2)=t_{1}(P_1)=a.$ Then, $f(P_1,P^1_2)=a$.
\end{itemize}
Thus, for each $P'_2\in\mathcal{P}$, $f(P_1,P'_2)=a$.
\item Assume that $f(P^\star)=b$. Similarly, we have for each $P'_1\in\mathcal{P}$, $f(P'_1,P_2)=b$. It contradicts to $f(P)=a$.
\end{enumerate}
Therefore, for each $P'_2\in\mathcal{P}$, $f(P_1,P'_2)=a$. The same argument applies to any distinct $i,j\in N$: that is, for each $P'_j\in\mathcal{P}$, $f(P_i,P'_j)=t_1(P_i)$.
\end{proof}

\begin{lem}\label{lem:pv_n2_WCSP_dic}
Suppose $n=2$. If a plurality rule is WCSP, then it is dictatorial.
\end{lem}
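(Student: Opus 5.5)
We argue by contraposition, combining Lemma \ref{lem:pv_n2_nondic} and Lemma \ref{lem:pv_n2_WCSP}. Assume $n=2$ and let $f:\mathcal{P}^2\to X$ be a plurality rule that is WCSP but not dictatorial. Lemma \ref{lem:pv_n2_nondic} then gives profiles $P,P',P''$ and distinct agents $i,j$ with
\[f(P)=t_1(P_i)\neq t_1(P_j),\qquad f(P')=t_1(P'_j)\neq t_1(P'_i).\]
So there is a profile where agent $i$'s top wins against a different top of $j$, and another where agent $j$'s top wins against a different top of $i$.

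Next we apply Lemma \ref{lem:pv_n2_WCSP} to each profile. From $P$ we get that $f(P_i,\cdot)\equiv t_1(P_i)$: agent $i$, reporting $P_i$, wins regardless of $j$'s report. From $P'$ we get that $f(\cdot,P'_j)\equiv t_1(P'_j)$: agent $j$, reporting $P'_j$, wins regardless of $i$'s report. Evaluating at the mixed profile $(P_i,P'_j)$ yields
\[t_1(P_i)=f(P_i,P'_j)=t_1(P'_j).\]

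This equality is not yet a contradiction, since $t_1(P_i)$ may coincide with $t_1(P'_j)$. The step I expect to be the real obstacle is ruling this out; the role of $P''$ in Lemma \ref{lem:pv_n2_nondic}, whose winner $f(P'')\notin\{f(P),f(P')\}$, is presumably to supply a profile whose winning top differs.

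Write $c=f(P'')=t_1(P''_i)\neq t_1(P''_j)$, so $c$ is $i$'s top at $P''$. Applying Lemma \ref{lem:pv_n2_WCSP} to $P''$ gives $f(P''_i,\cdot)\equiv c$. Combining with $f(\cdot,P'_j)\equiv t_1(P'_j)$ at the profile $(P''_i,P'_j)$ gives
\[c=f(P''_i,P'_j)=t_1(P'_j)=f(P'),\]
which contradicts $c\notin\{f(P),f(P')\}$.

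If Lemma \ref{lem:pv_n2_nondic} instead delivers $P''$ with the roles of $i$ and $j$ swapped, the argument is symmetric. In that case we pair the constancy of $f(\cdot,P''_j)\equiv c$ with $f(P_i,\cdot)\equiv t_1(P_i)$, evaluate at $(P_i,P''_j)$, and conclude $c=f(P)$, again a contradiction.

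Either way $f$ must be dictatorial. The only thing to check carefully is that the labeling of $i,j$ in Lemma \ref{lem:pv_n2_nondic} allows one of these two pairings; since that lemma fixes a single pair $(i,j)$ for all three profiles, the first pairing applies directly.
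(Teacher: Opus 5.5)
Your proof is correct and follows the paper's argument: you take the three profiles from Lemma~\ref{lem:pv_n2_nondic}, use Lemma~\ref{lem:pv_n2_WCSP} to make $f(P''_i,\cdot)$ and $f(\cdot,P'_j)$ constant, and evaluate at the mixed profile $(P''_i,P'_j)$. The only difference is cosmetic: the paper also evaluates at $(P_i,P'_j)$ and obtains the contradiction $f(P'')=f(P)$, while you obtain $f(P'')=f(P')$ directly, so that extra evaluation is unnecessary in your version.
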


\begin{proof}
Assume that $n=2$. Let $f : \mathcal{P}^2\to X$ be a plurality rule and WCSP. Suppose not. By Lemma \ref{lem:pv_n2_nondic}, there exist distinct $P,P',P''\in \mathcal{P}^2$ and distinct $i,j\in N$ such that $f(P)=t_1(P_i)\neq t_1(P_j)$, $f(P')=t_1(P'_j)\neq t_1(P'_i)$, $f(P'')=t_1(P''_i)\neq t_1(P''_j)$, and $f(P'')\notin \{f(P),f(P')\}$. Since $n=2$ and $f$ is a plurality rule and WCSP, by Lemma \ref{lem:pv_n2_WCSP}, for each $P^\star\in\mathcal{P}^2$, $f(P_i,P^\star_j)=t_1(P_i)$, $f(P^\star_i,P'_j)=t_1(P'_j)$, and $f(P''_i,P^\star_j)=t_1(P''_i)$. Thus, $f(P_i,P'_j)=t_1(P_i)=t_1(P'_j)$ and $f(P''_i,P'_j)=t_1(P''_i)=t_1(P'_j)$. Hence, $f(P)=t_1(P_i)=t_1(P''_i)=f(P'')$ and it contradicts to $ f(P'')\notin \{f(P),f(P')\}$. Therefore, $f$ is dictatorial.
\end{proof}

\begin{lem}\label{lem:pv_n3}
Suppose $n\ge3$. No plurality rule is WCSP.
\end{lem}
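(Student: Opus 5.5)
The plan is to exploit the fact that, once $n\ge3$, the other agents can force any alternative by a strict plurality. Then identify a worst-case scenario in which agent $i$ is pivotal, so that a deviation moves the outcome away from $t_m(P_i)$.

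\textbf{Step 1 (the worst outcome is the bottom alternative).} Fix $i$ and $P_i$, and let $a=t_m(P_i)$. If every $j\ne i$ ranks $a$ first, then $S(P,a)\ge n-1\ge 2>1\ge S(P,x)$ for every $x\ne a$. So $f(P)=a$ whatever the tie-breaking. Hence, for every $P_i$, a profile $\tilde P_{-i}$ is worst for $(P_i,f(P_i,\cdot))$ exactly when $f(P_i,\tilde P_{-i})=t_m(P_i)$. WCSP therefore demands: whenever $f(P_i,P_{-i})=t_m(P_i)$, no report $P'_i$ changes the outcome, since any change is a strict improvement for $P_i$. Refuting WCSP thus reduces to finding some $P$ with $f(P)=t_m(P_i)$ in which $i$ is pivotal.

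\textbf{Step 2 (near-tie profiles).} Take three distinct alternatives $a,b,c$ (available since $m\ge3$). Build $P_{-i}$ so that, after $i$'s vote, $a$ and $c$ are tied at the top, or $a$ leads $c$ by one. Concretely, split the $n-1$ other agents' first places between $a$ and $c$ as evenly as possible. When $n-1$ is odd, give $a$ the extra vote. Let $i$ vote for $b$, with $a$ or $c$ placed last as needed. In the margin-one configuration ($n-1$ odd, $i$ tops $b$ and ranks $a$ last), $a$ wins strictly, so the profile is worst. If $i$ switches its top to $c$, then $c$ ties $a$. If the tie-break then selects $c$, this is a profitable deviation. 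In the tie configuration ($n-1$ even), suppose the rule selects $a$ while $P_i$ ranks $a$ last. Then $i$ switching its top to $c$ gives $c$ a strict plurality, which is again a violation. Symmetrically, if the rule selects $c$ while $P_i$ ranks $c$ last, $i$ switching to $a$ gives a violation. For $n=3$, $b$ is also in the tie, and the same reasoning applies with three tied alternatives.

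\textbf{Step 3 (main obstacle: arbitrary tie-breaking).} Because the tie-breaking may depend on the whole profile, including $P_i$ itself, a rule could always resolve these ties in $i$'s favour. For example, it could pick among tied alternatives the one $i$ prefers, which defeats every construction in Step 2 for that particular $i$. My first attempt is to play the agents against each other. In a tie configuration where $i$ tops $b$ and some $j$ tops $c$, take $a$ as $i$'s bottom and $c$ as $j$'s. A tie-break among $a$ and $c$ cannot simultaneously avoid $i$'s last alternative and $j$'s last alternative if both are in the tie. So I would choose $P_i$ and $P_j$ such that each tied alternative is some pivotal agent's bottom, and each such agent can break the tie strictly by switching its top. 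With $n\ge3$ there are enough agents to assign every tied alternative to a distinct pivotal agent's bottom. Checking that each such agent's deviation indeed produces a strict winner is the routine verification, and it gives a violation of WCSP for whichever alternative $f$ selects. The delicate part is arranging the profile so that all these bottom assignments and pivotality conditions hold simultaneously for both parities of $n$ and for $n=3$. I would handle $n=3$, even $n$, and odd $n\ge5$ as separate cases.
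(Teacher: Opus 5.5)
Your Step 1 is correct. The device in Step 3 also works in some cases: you choose a tie profile in which each tied alternative is the bottom of an agent who can strictly move the outcome away from it. This succeeds, e.g., when $3\mid n$, via a three-way tie among $a,b,c$ with cyclic bottoms (the paper's Case 1), or when $n\le m$, via an $n$-way tie. But your claim that such a profile always exists for $n\ge3$ is false. The obstruction is arithmetic, not a shortage of agents.

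In a two-way tie between $x$ and $y$, consider an agent whose bottom is $x$ and who votes for $y$. Switching can only lower $y$, so the outcome stays $x$ or a new tie appears. Hence each tied alternative needs its own pivotal voter outside the tie. Concrete failures:
\begin{itemize}
\item For $m=3$, $n=5$, every profile with a tie at the top has first-place counts $(2,2,1)$ on some $(x,y,z)$. Only the single $z$-voter can strictly move the outcome off his own bottom, and that bottom is only one of $x,y$.
\item For $m=3$, $n=4$, the only tie is $(2,2,0)$. There no agent whose bottom is a tied alternative can change the outcome at all.
\item The pairs $(n,m)=(7,3)$ and $(5,4)$ fail in the same way.
\end{itemize}
So in these cases no single profile refutes WCSP ``whichever alternative $f$ selects'' by deviations out of the tie. (Also, in Step 3 you let $j$ both top $c$ and rank $c$ last.)

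The missing idea is to let the misreport lead \emph{into} the tie profile. Suppose the tie profile $Q$ selects $y$. Find an agent $j$ and a true preference $P_j$ whose bottom $w\neq y$ wins strictly at $(P_j,Q_{-j})$. Then $Q_{-j}$ is worst for $P_j$, and reporting $Q_j$ yields $y$, a strict improvement. This closes your margin-one configuration of Step 2 for even $n\ge4$: if the resulting $a$--$c$ tie selects $a$, let one of the $a$-voters truly top $b$ and rank $c$ last. The paper uses the same device for $n=3k+4$.

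For $(n,m)=(5,3)$ even this fails for any single tie profile, in both directions. The paper's Case 3 ($n=3k+5$) therefore chains two tie profiles, $P$ with $f(P)\in\{a,b\}$ and $(P'_2,P_{-2})$ with outcome in $\{b,c\}$. It finds a violation in each of the three resulting cases.
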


\begin{proof}
See Appendix \ref{app:pv_n3}.
\end{proof}

\begin{proof}[Proof of Theorem \ref{thm:pv}]
($\Longrightarrow$) It follows from Lemmas \ref{lem:pv_n2_WCSP_dic} and \ref{lem:pv_n3}.\\
($\Longleftarrow$) It follows form Proposition \ref{prop:GS2}.
\end{proof}

According to \textcite{aziz2021obvious}, any plurality rule with $n\ge3$ satisfies NOM. However, Theorem \ref{thm:pv} demonstrates that plurality rules can be considered vulnerable to manipulations.

\subsection{Other scoring rules}
Then, we examine WCSP for Borda rules and Dowdall rules. Theorems \ref{thm:borda} and \ref{thm:dowdall} below show that Borda rules and Dowdall rules are not WCSP.

\begin{thm}\label{thm:borda} 
 For all $n\ge 2$ and all $m\ge3$, no Borda rule is WCSP.
\end{thm}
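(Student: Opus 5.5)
The plan is to exhibit, for an arbitrary Borda rule $f$ (with arbitrary tie-breaking), one agent $i$, one preference $P_i$, one profile $\tilde P_{-i}$ that is worst for $(P_i,f(P_i,\cdot))$, and one misreport $P'_i$ with $f(P'_i,\tilde P_{-i})\mathrel{P_i} f(P_i,\tilde P_{-i})$. WCSP quantifies over every worst $\tilde P_{-i}$, so a single well-chosen worst profile suffices. Throughout I normalize $s(k)=m-k$; a positive affine change of the score vector does not change the rule.

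The main tool is a \emph{narrow-margin} construction. Fix $P_i$ with $w=t_m(P_i)$ and $y=t_{m-1}(P_i)$.

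\textbf{Case $n\ge3$.} I build $\tilde P_{-i}$ so that, under truthful $P_i$, $w$ is the \emph{strict} Borda winner. Then $f(P_i,\tilde P_{-i})=w$ whatever the tie-breaking. Since $w$ is the bottom of $P_i$, $\tilde P_{-i}$ is automatically worst for $(P_i,f(P_i,\cdot))$. I also want
\[0< S(P_i,\tilde P_{-i},w)-S(P_i,\tilde P_{-i},y)<m-2,\]
with every other alternative well below $w$. For the misreport $P'_i$ I put $y$ on top and keep $w$ last. Then $y$ gains exactly $m-2$, $w$ is unchanged at $0$ from $i$, and each other alternative moves down at most one position. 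So $y$ becomes the strict winner, giving $f(P'_i,\tilde P_{-i})=y\mathrel{P_i}w$.

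For the construction itself, the other agents rank $w$ at or near the top, placing $y$ just below $w$ in most ballots. The remaining alternatives, in particular $t_1(P_i)$, are spread over the low positions in a rotating (cyclic) fashion, so that none of them accumulates a large total. The bookkeeping has two parts. First, I split off a couple of agents whose ballots adjust the gap $S(w)-S(y)$ into the window $[1,m-3]$; this needs $m\ge4$. For $m=3$ the window is empty, so I instead move $y$ from position $2$ to the top in $P'_i$ and, where necessary, use a $P_i$ with $y=t_2(P_i)$, recomputing. Second, I check that the rotated alternatives stay strictly below both $w$ and $y$ after $i$'s deviation. This is routine but needs care for small $n$.

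\textbf{Case $n=2$.} This is where I expect the main obstacle. Here $w=t_m(P_i)$ can never win strictly: $S(w)\le m-1\le S(t_1(P_i))$. So the worst outcome of $i$ depends on the unknown tie-breaking, and I cannot name $w$ in advance. I would argue abstractly. Let $z$ be the worst outcome over all $P_{-i}$ for a fixed $P_i$, and take any worst $\tilde P_j$. By Borda, $z$ has maximal total score there.

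If $z\neq t_1(P_i)$, I let $P'_i$ move $z$ to the bottom and move a better alternative $u\mathrel{P_i}z$ that is tied with, or close to, $z$ up to the top. The aim is to make $u$ the strict winner or, failing that, to ensure every remaining maximizer is $P_i$-better than $z$. The second alternative suffices, because $z$ is the unique worst feasible outcome relative to $P_i$. Indeed, any winner other than $z$ must be strictly better than $z$ in $P_i$ unless it is ranked below $z$. That is why I would choose $P_i$ so that the alternatives below $z$ can be shown to lose: they receive low scores from $i$, and $i$'s deviation does not raise them.

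If $z=t_1(P_i)$, every outcome equals $t_1(P_i)$ regardless of $P_j$. This is impossible when $P_j$ ranks $t_1(P_i)$ last and some other alternative first and second appropriately, since then that alternative strictly outscores $t_1(P_i)$.

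The delicate point is guaranteeing that, after the deviation, no alternative ranked below $z$ by $P_i$ enters the set of maximizers. I would try first to handle this by choosing $P_i$ and the worst profile so that $z$'s tie is only with alternatives above $z$ in $P_i$. Failing that, I would fall back on a direct enumeration for $m=3$ and an induction-free explicit choice for $m\ge4$.
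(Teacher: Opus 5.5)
Your narrow-margin argument for $n\ge 3$, $m\ge 4$ is the paper's own (Cases 5 and 6). There $t_m(P_1)$ wins strictly by $s(1)-s(2)$ over $t_{m-1}(P_1)$, and lifting $t_{m-1}(P_1)$ to the top gains $s(1)-s(m-1)>s(1)-s(2)$. You only need $y$ to strictly outscore $w$ afterwards, since every outcome other than $w=t_m(P_i)$ is an improvement.

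\textbf{The case $m=3$ fails.} The strict-winner approach cannot work at all here, and your repair does not help. Take $s=(2,1,0)$. If $w=t_3(P_i)$ is the strict maximizer under truthful reporting, then $S(w)\ge S(y)+1$ by integrality. For every $P'_i$ you get $S'(w)\ge S(w)$, $S'(t_1(P_i))\le S(t_1(P_i))<S(w)$, and $S'(y)\le S(y)+1\le S(w)$. So $w$ stays a maximizer after every deviation, and a Borda rule that breaks that tie in favor of $w$ is never manipulated there. Taking $y=t_2(P_i)$ and moving it to the top is the same deviation, since $t_2=t_{m-1}$ when $m=3$.

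The missing idea is to exploit the arbitrary tie-breaking rather than avoid it. The paper uses a fully tied cyclic profile: $abc$, $bca$, $cab$ for odd $n$, or $abc$, $acb$, $bca$, $cba$ for even $n$, plus $k$ copies each of $cba$ and $abc$. In it, each of $a,b,c$ is the bottom alternative of some agent. Whichever alternative $f$ selects, the profile is worst for that agent, and swapping her top two alternatives creates a strict winner she prefers.

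\textbf{The case $n=2$ is a plan, not a proof.} The step you flag as delicate is where all the content lies. Since $z$ and the worst profiles depend on the unspecified tie-breaking, you cannot say which alternatives are maximizers at $(P'_i,\tilde P_j)$. Alternatives below $z$ in $P_i$ are missing from the range of $f(P_i,\cdot)$ only because of how $f$ breaks ties, and moving $z$ to the bottom raises each of them by one. Quantifying over any worst $\tilde P_j$ also makes your task harder than necessary: one worst profile with a profitable deviation suffices, so you should choose it.

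The paper does this concretely. For $m\ge 4$, take $P_1=a,b,\ldots,c,d$ and its reverse $P_2$, where all scores tie. Then $d$ can be a maximizer against $P_1$ only when the opponent reports exactly the reverse of $P_1$. So either $f(P)=d$, and swapping $a,b$ makes $b$ the strict winner. Or $d$ is never selected against $P_1$; then $c=t_{m-1}(P_1)$ is the worst outcome, attained as a strict winner at $P'_2=c,d,\ldots,b,a$, and swapping both the top pair and the bottom pair of $P_1$ makes $b$ the strict winner.

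For $m=3$ no such shortcut exists. Against $P_i=abc$, the only opponent report at which $b$ wins strictly is $bca$, and there $a$ can at best tie $b$. The paper therefore needs a six-way case analysis over $f(P)$, $f(P^\star)$ and $f(P_1,P^\star_2)$ for two fully tied profiles. Your ``direct enumeration'' would have to reproduce something of that kind.
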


\begin{proof}
See Appendix \ref{app:borda}.
\end{proof}

\begin{thm}\label{thm:dowdall}
For all $n\ge 2$ and all $m\ge3$, no Dowdall rule is WCSP.
\end{thm}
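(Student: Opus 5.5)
The approach is to exhibit, for every $n\ge2$ and $m\ge3$, a preference $P_i$ and a profile $\tilde P_{-i}$ such that $f(P_i,\tilde P_{-i})=t_m(P_i)$ with a \emph{strict} score margin. Such a $\tilde P_{-i}$ is automatically worst for $(P_i,f(P_i,\cdot))$, whatever the tie-breaking. It then remains to find a misreport $P'_i$ under which another alternative strictly beats everything. Because Dowdall rules are invariant to positive affine transformations of $s$, I normalize $\alpha=1$ and $\beta=0$, so that $s(k)=1/k$. Working with strict inequalities throughout is essential: the statement covers every Dowdall rule, with no assumption on tie-breaking.

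Fix $P_i$ with $t_1(P_i)=b$, $t_m(P_i)=c$, and the remaining alternatives in between. Write $\delta=S_{-i}(c)-S_{-i}(b)$ for the difference in scores contributed by the other agents. Under truthful reporting, $c$ beats $b$ strictly iff $\delta>1-1/m$. Under the misreport $P'_i$ that keeps $b$ first and moves $c$ from last to second, $b$ beats $c$ iff $\delta<(1-1/m)-(1/2-1/m)\cdot(-1)$. This comparison is not productive, so I will instead compare truthful reporting against a misreport that \emph{lowers} a third alternative. The real lever is the following: in the worst case the winner $c$ is already at the bottom of $P_i$, so $i$ cannot hurt $c$. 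Agent $i$ can only redistribute the scores of the other alternatives, moving $b$ to first (already there) while demoting a rival $d\ne b,c$. The rival $d$ must therefore be the strongest competitor, and the cleaner construction is the following. Choose $P_i$ with $t_1(P_i)=a$, $t_2(P_i)=b$, $t_m(P_i)=c$, and choose $\tilde P_{-i}$ so that, with $i$'s truthful scores added, $c$ wins strictly, while $b$ trails $c$ by less than $s(1)-s(2)=1/2$ and $a$ trails by more. The misreport $P'_i$ puts $b$ first and $a$ second. This raises $b$ by $1/2$, leaves $c$ unchanged, and lowers $a$, so $b$ wins strictly and $b\mathrel{P_i}c$.

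So the key step is to construct $\tilde P_{-i}$ such that
\[
0< S(P_i,\tilde P_{-i},c)-S(P_i,\tilde P_{-i},b)<\tfrac12,
\]
with every other alternative strictly below $b$ after the misreport. For $n\ge3$, I would build $\tilde P_{-i}$ from agents ranking $c$ first and $b$ second, balanced by agents ranking $b$ first and $c$ second, with $a$ and the remaining alternatives pushed to the bottom of those rankings. The margin between $c$ and $b$ can then be tuned in increments like $\tfrac12$ or $\tfrac12-\tfrac13$. Parity of $n-1$ will force two sub-cases, and in the odd case I would use one agent ranking $c$ first and $b$ third to obtain a fractional gap such as $1-1/3$ minus $i$'s own $1/2-1/m$. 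I expect to verify the inequalities directly with $s(k)=1/k$.

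The main obstacle is the small cases, above all $n=2$ (and $m=3$). There, as a quick check shows, $c=t_m(P_i)$ may be unable to win strictly: for $m=3$, $c$ can only tie with $i$'s top. For these cases I would do an explicit case analysis on how $f$ breaks the relevant ties. Either the tie goes to $c$, in which case the profile is worst and a direct misreport such as swapping $i$'s top two alternatives produces a strict winner better than $c$. Or the tie never goes to $c$, in which case $c$ is unreachable under truthful reporting. Then I identify $t_{m-1}(P_i)$ as the worst outcome, using a profile in which it wins strictly, and repeat the construction one level up, as in the plurality argument of Lemma \ref{lem:pv_n2_WCSP}. Handling this branching over an arbitrary tie-breaking rule is where I expect most of the work.
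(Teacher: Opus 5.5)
Your core device (make $c=t_m(P_i)$ win \emph{strictly} with a margin over $b=t_2(P_i)$ below $s(1)-s(2)=\tfrac12$, then swap $i$'s top two) is exactly what the paper does for $n\ge4$. As written, though, the proposal does not establish the theorem, and the parity step is wrong. With only agents of type $(c,b,\ldots)$ and $(b,c,\ldots)$, the others' contribution $D=S_{-i}(c)-S_{-i}(b)$ is a sum of terms $\pm\tfrac12$, and you need $\tfrac12-\tfrac1m<D<1-\tfrac1m$.
\begin{itemize}
\item When $n-1$ is odd, $D=\tfrac12$ already works for $n\ge4$, so no extra device is needed.
\item When $n-1$ is even, $D$ is an integer and no choice of such agents works.
\item The extra agent you propose (with $c$ first and $b$ third, plus balanced pairs) gives the margin $\tfrac23-(\tfrac12-\tfrac1m)=\tfrac16+\tfrac1m$. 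This equals $\tfrac12$ when $m=3$, so the misreport only produces a $b$--$c$ tie. That proves nothing for an arbitrary tie-breaking.
\end{itemize}
The paper's fix for odd $n\ge5$ uses two agents $(c,a,\ldots,b)$ and two agents $(b,c,\ldots,a)$. This gives $D=1-\tfrac2m$ and margin $\tfrac12-\tfrac1m$. The case $n=3$, $m\ge4$ needs its own profile: the paper lets one agent rank a rival $d$ first and uses the larger lever $s(1)-s(3)=\tfrac23$.

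The more serious gap is that the strict-win method is impossible in the small cases, and your fallback is never carried out.
\begin{itemize}
\item For $n=2$ and every $m$, $S(t_1(P_i))\ge 1+\tfrac1m\ge S(t_m(P_i))$, so $t_m(P_i)$ never wins strictly.
\item For $m=n=3$, $c$ wins strictly only if both other agents rank it first. Then $S(c)=\tfrac73$, and no report of $i$ lifts any rival above $2$.
\end{itemize}
Your branching over the tie-breaking does not close as described. Take $m=3$, $n=2$, $P_i=(a,b,c)$, and suppose $c$ is never selected. The strict worst case $P_A=(b,c,a)$ gives $b$. Agent $i$'s only candidate deviation, $(a,c,b)$, creates an $a$--$b$ tie at $\tfrac43$. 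That forces yet another branch, and you give no argument that this recursion terminates.

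The missing idea is to make the tie-breaking defeat itself. Use a profile whose tied winners are each the \emph{bottom} alternative of some agent. Whichever one $f$ picks is then a worst case for that agent, who has a strict improvement. The paper does this with $P_1=(a,b,\ldots,c)$ and $P_2=(c,b,\ldots,a)$ for $n=2$: $a$ and $c$ tie at $1+\tfrac1m$, and the agent who is hurt lifts $b$ to $\tfrac32$. For $m=n=3$ it uses a Condorcet cycle.
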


\begin{proof}
See Appendix \ref{app:dowdall}.
\end{proof}

According to \textcite{aziz2021obvious}, any Borda rule and any Dowdall rule with $n \ge 3$ are NOM. However, Theorems \ref{thm:borda} and \ref{thm:dowdall} demonstrate that these rules are not WCSP and vulnerable to manipulations.

Next, we examine WCSP for general scoring rules. Theorem \ref{thm:nonapv} below shows that among the class of scoring rules with $\succ$, no rule other than anti-plurality rules satisfies WCSP when $n \geq m$.

\begin{thm}\label{thm:nonapv}
Suppose $n\geq m$. No scoring rule with $\succ $, other than anti-plurality rules, is WCSP.
\end{thm}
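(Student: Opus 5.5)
The plan is to build a worst-case manipulation directly, in the spirit of the ($\Longrightarrow$) part of Theorem \ref{thm:apv}. Let $f$ be a scoring rule with $\succ$ and score vector $s$, and suppose $f$ is not an anti-plurality rule. Since $s(1)>s(m)$ always holds, being anti-plurality is equivalent to $s(1)=s(m-1)$. So not being anti-plurality means
\[\delta := s(1)-s(m-1) > 0 .\]
Fix an agent $i$, and let $P_i$ be a preference with $t_m(P_i)=x_1$ and $t_{m-1}(P_i)=z$ for some $z\neq x_1$. Putting $x_1$ at the bottom means the tie-breaking order works \emph{against} agent $i$.

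\textbf{Step 1.} Construct $\tilde P_{-i}\in\mathcal{P}^{n-1}$ such that $f(P_i,\tilde P_{-i})=x_1=t_m(P_i)$. Any such profile is automatically worst for $(P_i,f(P_i,\cdot))$. In addition I want the margin of $x_1$ over $z$ to be small:
\[0 \le S(P_i,\tilde P_{-i},x_1)-S(P_i,\tilde P_{-i},z) < \delta .\]
I also want every other alternative $y$ to satisfy $S(P_i,\tilde P_{-i},y) < S(P_i,\tilde P_{-i},z)+\delta$, or at least to stay below $z$'s new score after the deviation in Step 2. If possible I would aim for an exact tie between $x_1$ and $z$, with $x_1$ winning through $\succ$.

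\textbf{Step 2.} Let $P'_i$ be obtained from $P_i$ by moving $z$ to the top. Then $z$ gains exactly $\delta$ from agent $i$. Every other alternative weakly loses, except that the old top alternative shifts down and $x_1$ keeps $s(m)$. Hence $z$ strictly beats $x_1$, and by Step 1 it beats or ties (favourably) every other alternative. So $f(P'_i,\tilde P_{-i})$ is some alternative other than $x_1$. Since $x_1$ is $i$'s bottom alternative, this gives $f(P'_i,\tilde P_{-i}) \mathrel{P_i} x_1$, so WCSP fails.

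To make the last claim safe, I would rather arrange that after the deviation $z$ is the unique maximizer, or that any alternative tied with $z$ is also ranked above $x_1$ by $P_i$. The latter holds trivially, since every $y\neq x_1$ is ranked above $x_1$. So it suffices that after the deviation $x_1$ is no longer a maximizer, and this is exactly what the margin condition $<\delta$ gives.

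\textbf{Step 3 (the main obstacle).} The hard part is the construction in Step 1, and this is where $n\ge m$ enters: there are $n-1\ge m-1$ other agents. My first attempt is a cyclic (Latin-square) arrangement: let $m-1$ of the other agents rank the alternatives by the cyclic shifts of a fixed order. Over any block of such agents each alternative collects roughly equal total score, and the remaining $n-m$ agents can be paired or handled in the same cyclic way.

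On top of this balanced base I would make a local swap to break the symmetry in favour of $x_1$. For example, in one agent $j$ I would exchange $x_1$ with the alternative placed in top position. This is meant to offset the advantage $s(m-1)-s(m)$ that $z$ gets from agent $i$, and to leave $x_1$ weakly ahead of $z$ and of every other alternative. The tie-breaking $x_1\succ$ everything then secures $f(P_i,\tilde P_{-i})=x_1$.

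I expect the delicate part to be checking that $x_1$'s margin over $z$ stays below $\delta$ for a \emph{general} (possibly non-strict) score vector. If a single swap overshoots, I would choose the swap positions adaptively. Concretely, I would take $k$ to be the largest index with $s(k)>s(k+1)$ among $k\le m-2$, and swap $x_1$ and $z$ only across that step. This should keep the score gaps controlled by differences already bounded by $\delta$. If that also fails I would fall back on a case split according to where the first strict drop in $s$ occurs.
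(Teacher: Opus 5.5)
Your Step 2 is correct. If $x_1=t_m(P_i)$ wins at $(P_i,\tilde P_{-i})$ with $0\le S(x_1)-S(z)<\delta$, raising $z$ to the top adds exactly $\delta$ to $z$ and nothing to $x_1$, so $x_1$ is no longer a maximizer and agent $i$ strictly gains. But Step 1 carries the whole theorem, you only sketch it, and as formulated it is false in general.

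Take the plurality rule with $\succ$ for $m=3$, $n=4$. Normalize $s=(1,0,0)$, so $\delta=1$, and let $P_i=(y,z,x_1)$. The margin condition forces $S(x_1)=S(z)$ by integrality. Agent $i$'s vote gives $S(y)\ge 1$, and $2S(x_1)+S(y)=4$. So either $S(y)=2>1=S(x_1)$ or $S(y)=4$, and $x_1$ never wins with the margin you need. More strongly, under plurality agent $i$ can only move her single point, and $x_1$ wins every tie. She can therefore dislodge $x_1$ only if some alternative is exactly tied with it. Hence in this case no agent whose bottom alternative is $x_1$ has a profitable deviation at any worst-case scenario. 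No refinement of your swaps, and no case split on where $s$ first drops, can rescue this. The obstruction is parity ($n-m$ odd with $s(1)>s(2)$), and your choice $t_m(P_i)=x_1$ excludes the only agents who can violate WCSP here.

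The missing idea is to let the tie-break work \emph{for} the manipulator when an exact tie cannot be produced. Your Latin-square idea does succeed, with $z=x_2$, when $s(1)=s(2)$ or $n-m$ is even. An $m$-agent Latin square in which agent $1$ ranks $x_2,x_1$ last gives all alternatives equal scores. Extra agents $(x_1,x_2,\ldots)$ and $(x_2,x_1,\ldots)$ in equal numbers (or any number of $(x_1,x_2,\ldots)$ if $s(1)=s(2)$) then keep $x_1,x_2$ exactly tied at the top. This is the paper's Cases 1 and 2.

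When $s(1)>s(2)$ and $n=m+2k+1$, the paper reverses roles. Agent $1$ ranks $x_1$ at position $m-1$ and $x_2$ last. The extra agents are $k$ copies of $(x_1,x_2,\ldots)$ and $k+1$ copies of $(x_2,x_1,\ldots)$, so $x_2=t_m(P_1)$ wins by exactly $s(1)-s(2)$. Raising $x_1$ to the top adds $s(1)-s(m-1)\ge s(1)-s(2)$, so $x_1$ at least ties $x_2$, and $x_1\succ x_2$ guarantees that $x_2$ no longer wins. Here a margin $\le\delta$ suffices rather than $<\delta$, and that is exactly what absorbs the odd extra agent.
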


\begin{proof}
See Appendix \ref{app:nonapv}.
\end{proof}

The following corollary is an immediate consequence of Theorems \ref{thm:apv} and \ref{thm:nonapv}.

\begin{cor}
Suppose $n\ge 2m-1$.
A scoring rule with $\succ $ is WCSP if and only if it is an anti-plurality rule.
\end{cor}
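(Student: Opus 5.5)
The plan is to obtain the corollary by combining Theorem \ref{thm:apv} and Theorem \ref{thm:nonapv}, after checking that the hypothesis $n\ge 2m-1$ makes both theorems applicable. No new construction should be needed; the work is in matching hypotheses.

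First I will check that $n\ge 2m-1$ implies $n\ge m$. This holds because $2m-1\ge m$ whenever $m\ge 1$, and under the standing Assumption we have $m\ge3$. So the hypothesis of Theorem \ref{thm:nonapv} is met.

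\emph{If direction.} Let $f$ be a scoring rule with $\succ$ that is an anti-plurality rule. Since $n\ge 2m-1$, Theorem \ref{thm:apv} applies directly and gives that $f$ is WCSP.

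\emph{Only-if direction.} Let $f$ be a scoring rule with $\succ$ that is WCSP, and suppose for contradiction that it is not an anti-plurality rule. Since $n\ge m$, Theorem \ref{thm:nonapv} says no such rule is WCSP, which is a contradiction. Hence $f$ is an anti-plurality rule.

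There is essentially no obstacle. The only point needing care is that the scope of Theorem \ref{thm:apv} matches the class in the corollary. The score vector in an anti-plurality rule is determined only up to the form $s(1)=\cdots=s(m-1)>s(m)$, and Theorem \ref{thm:apv} is stated for every anti-plurality rule with $\succ$. Its proof normalizes to $s(1)=\cdots=s(m-1)=1$ and $s(m)=0$, which is harmless because a positive affine transformation of $s$ does not change the total-score comparisons or the tie-breaking. So the theorem covers every rule in the class, and the two directions together give the equivalence.
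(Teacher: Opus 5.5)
Your proposal is correct and follows the paper's own argument: the corollary is an immediate consequence of Theorem~\ref{thm:apv} for the ``if'' direction (using $n\ge 2m-1$) and Theorem~\ref{thm:nonapv} for the ``only if'' direction (using $n\ge 2m-1\ge m$). Your extra remark about normalizing the score vector is harmless but unnecessary, since Theorem~\ref{thm:apv} is already stated for every anti-plurality rule with $\succ$.
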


\begin{table}[tbp]
\centering
\begin{tabular}{|c|c|c|c|} \hline
\multirow{5}{*}{Scoring rule}&Anti-plurality rule with $\succ$&$n\geq 2m-1 \Leftrightarrow +$ &Theorem \ref{thm:apv}\\ \cline{2-4}
&Plurality rule&dictatorial $\land$ $n=2$ $\Leftrightarrow +$ &Theorem \ref{thm:pv}\\ \cline{2-4}
&Borda rule&$-$&Theorem \ref{thm:borda}\\ \cline{2-4}
&Dowdall rule&$-$&Theorem \ref{thm:dowdall}\\ \cline{2-4}
&with $\succ$ (other than anti-plurality rule)& $n\geq m \Rightarrow -$&Theorem \ref{thm:nonapv}\\ \hline
\end{tabular}
\caption{Satisfaction of WCSP with at least three alternatives ($m\geq3$)}
\label{table:m3} 
\end{table}

\begin{table}[tbp]
\centering
\begin{tabular}{|c|c|}\hline
Strategy-proofness $\Rightarrow$ WCSP $\Rightarrow$ NOM-worst &Propositions \ref{prop:SP_WCSP} and \ref{prop:WCSP_NOMw}\\ \hline
$m=2$ $\Rightarrow$ tops-only $\Rightarrow$ (strategy-proofness $\Leftrightarrow$ WCSP)&Propositions \ref{prop:tops} and \ref{prop:m=2}\\ \hline
$m\ge3$ $\Rightarrow$ (dictatorial $\Leftrightarrow$ WCSP $\land$ tops-only $\land$ onto)&Proposition \ref{prop:GS2}\\ \hline
Anti-plurality rule with $\succ$ $\Rightarrow$ ($n\ge2m-1$ $\Leftrightarrow$ WCSP $\land$ onto)&Proposition \ref{prop:WCSP_onto}\\ \hline
\end{tabular}
\caption{Implication of WCSP}
\label{table:im}
\end{table}

\section{Conclusion}
We introduced a new non-manipulability axiom called \emph{worst-case strategy-proofness (WCSP)}. This axiom was motivated by the need to advocate for a less stringent condition to \emph{strategy-proofness}, which is somewhat restrictive in voting, while strengthening the robustness axiom against \emph{obvious manipulations} \citep{troyan2020}. We examined the implications of WCSP in a voting model. Tables \ref{table:m3} and \ref{table:im} summarize the results. We found a necessary and sufficient condition for the anti-plurality rule with fixed-order tie-breaking to satisfy WCSP. We have also observed that standard voting rules including the plurality rule, the Borda rule, and the Dowdall rule violate WCSP. Future research could examine WCSP for other voting rules, or investigate WCSP in other models.

\newpage
\printbibliography

\appendix
\section{Appendix}
\subsection{Proof of Lemma \ref{lem:pv_n3}}\label{app:pv_n3}
We shall show that if $n\ge3$, then no plurality rule is WCSP. Assume that $n\geq3$. Let $f : \mathcal{P}^n\to X$ be a plurality rule. Take any distinct $a,b,c\in X$. There are three cases to consider.\\\\
\textbf{Case 1. $\bm{n=3k+3}$ with $\bm {k\in \mathbb{Z}_{\geq 0}}$:}\\
Let $P \in \mathcal{P}^{3k+3}$ be given by the following table:\footnote{In the table, for each agent $i \in N$, $t_1(P_i), t_2(P_i), \ldots, t_m(P_i)$ are arranged from top to bottom. The same background color represents the same order across the agents, while light gray and dark gray indicate opposite orders.\label{fot:table}}
\begin{table}[H]
\centering
\begin{tabular}{ccc|ccc|ccc|ccc}
$P_1$ &$P_2$&$P_3$&$P_4$&$\cdots$&$P_{k+3}$& $P_{k+4}$& $\cdots$& $P_{2k+3}$& $P_{2k+4}$& $\cdots$&$P_{3k+3}$\\ \hline
$a$& $b$&$c$ & $a$& $\cdots$& $a$& $b$& $\cdots$& $b$& $c$ & $\cdots$&$c$ \\
$b$&$c$&$a$ & $\vdots$& $\cdots$& $\vdots$& $\vdots$& $\cdots$& $\vdots$& $\vdots$& $\cdots$&$\vdots$\\
$\vdots$& $\vdots$& $\vdots$& $\vdots$& $\cdots$& $\vdots$& $\vdots$& $\cdots$& $\vdots$& $\vdots$& $\cdots$&$\vdots$\\
$c$& $a$&$b$ & $\vdots$& $\cdots$& $\vdots$& $\vdots$& $\cdots$& $\vdots$& $\vdots$& $\cdots$&$\vdots$\\
\multicolumn{3}{c}{}& \multicolumn{3}{c}{\upbracefill}& \multicolumn{3}{c}{\upbracefill}& \multicolumn{3}{c}{\upbracefill}\\
\multicolumn{3}{c}{}& \multicolumn{3}{c}{$k$ agents}& \multicolumn{3}{c}{$k$ agents}& \multicolumn{3}{c}{$k$ agents}\\
\end{tabular}
\end{table} 
Then, for each $x\in X \setminus \{a,b,c\}$, $S(P,a)=S(P,b)=S(P,c)>S(P,x)$ and $f(P)\in\{a,b,c\}$. Let $P'\in\mathcal{P}^n$ be such that $t_1(P'_1)=b$, $t_1(P'_2)=c$, and $t_1(P'_3)=a$.
\begin{enumerate}
\item If $f(P)=a$, then $P_{-2}$ is worst for $(P_2,f(P_2,\cdot))$ and $f(P'_2,P_{-2})=c\mathrel{P_2}a=f(P)$.
\item If $f(P)=b$, then $P_{-3}$ is worst for $(P_3,f(P_3,\cdot))$ and $f(P'_3,P_{-3})=a\mathrel{P_3}b=f(P)$.
\item If $f(P)=c$, then $P_{-1}$ is worst for $(P_1,f(P_1,\cdot))$ and $f(P'_1,P_{-1})=b\mathrel{P_1}c=f(P)$.
\end{enumerate}
Therefore, $f$ is not WCSP.\\\\
\textbf{Case 2. $\bm{n=3k+4}$ with $\bm {k\in \mathbb{Z}_{\geq 0}}$:}\\
Let $P \in \mathcal{P}^{3k+4}$ and $P'_1,P'_3\in \mathcal{P}$ be given by the following table:\footref{fot:table}
\begin{table}[H]
\centering
\begin{tabular}{cccc|ccc|ccc|ccc||cc}
$P_1$ &$P_2$&$P_3$ &$P_4$ &$P_5$&$\cdots$&$P_{k+4}$& $P_{k+5}$& $\cdots$& $P_{2k+4}$& $P_{2k+5}$& $\cdots$&$P_{3k+4}$&$P'_1$&$P'_3$\\ \hline
$b$&$b$& $c$&$c$ & $a$& $\cdots$& $a$& $b$& $\cdots$& $b$& $c$ & $\cdots$&$c$&$a$&$a$\\
$\vdots$&$\vdots$&$\vdots$&$\vdots$& $\vdots$& $\cdots$& $\vdots$& $\vdots$& $\cdots$& $\vdots$& $\vdots$& $\cdots$&$\vdots$&$b$&$c$\\
$\vdots$& $\vdots$& $\vdots$& $\vdots$& $\vdots$& $\cdots$& $\vdots$& $\vdots$& $\cdots$& $\vdots$& $\vdots$& $\cdots$& $\vdots$& $\vdots$&$\vdots$\\
$\vdots$& $\vdots$& $\vdots$& $\vdots$& $\vdots$& $\cdots$& $\vdots$& $\vdots$& $\cdots$& $\vdots$& $\vdots$& $\cdots$& $\vdots$& $c$&$b$\\
\multicolumn{4}{c}{}& \multicolumn{3}{c}{\upbracefill}& \multicolumn{3}{c}{\upbracefill}& \multicolumn{3}{c}{\upbracefill}&\multicolumn{2}{c}{}\\
\multicolumn{4}{c}{}& \multicolumn{3}{c}{$k$ agents}& \multicolumn{3}{c}{$k$ agents}& \multicolumn{3}{c}{$k$ agents}&\multicolumn{2}{c}{}\\
\end{tabular}
\end{table} 
Then, for each $x\in X \setminus \{b,c\}$, $S(P,b)=S(P,c)>S(P,x)$ and $f(P)\in\{b,c\}$.
\begin{enumerate}
\item Assume that $f(P)=b$. Then, for each $x\in X \setminus \{c\}$, $S(P'_1,P_{-1},c)>S(P'_1,P_{-1},x)$ and $f(P'_1,P_{-1})=c$. Thus, $P_{-1}$ is worst for $(P'_1,f(P'_1,\cdot))$ and $f(P)=b\mathrel{P'_1}c=f(P'_1,P_{-1})$.
\item Assume that $f(P)=c$. Then, for each $x\in X \setminus\{b\}$, $S(P'_3,P_{-3},b)>S(P'_3,P_{-3},x)$ and $f(P'_3,P_{-3})=b$. Thus, $P_{-3}$ is worst for $(P'_3,f(P'_3,\cdot))$ and $f(P)=c\mathrel{P'_3}b=f(P'_3,P_{-3})$.
\end{enumerate}
Then, $f$ is not WCSP.\\\\
\textbf{Case 3. $\bm{n=3k+5}$ with $\bm {k\in \mathbb{Z}_{\geq 0}}$:}\\
Let $P \in \mathcal{P}^{3k+5}$ and $P'_1,P'_2,P'_5\in \mathcal{P}$ be given by the following table:\footref{fot:table}
\begin{table}[H]
\centering
\begin{tabular}{ccccc|ccc|ccc|ccc||ccc}
$P_1$ &$P_2$ & $P_3$&$P_4$&$P_5$&$P_6$&$\cdots$&$P_{k+5}$& $P_{k+6}$& $\cdots$& $P_{2k+5}$& $P_{2k+6}$& $\cdots$&$P_{3k+5}$ & $P'_1$& $P'_2$ &$P'_5$\\ \hline
$a$& $a$& $b$&$b$&$c$ & $a$& $\cdots$& $a$& $b$& $\cdots$& $b$& $c$ & $\cdots$&$c$ & $b$& $c$&$a$\\
$b$&$\vdots$& $\vdots$&$\vdots$ &$a$ & $\vdots$& $\cdots$& $\vdots$& $\vdots$& $\cdots$& $\vdots$& $\vdots$& $\cdots$&$\vdots$ & $\vdots$& $a$ &$\vdots$\\
$\vdots$& $\vdots$& $\vdots$&$\vdots$ & $\vdots$& $\vdots$& $\cdots$& $\vdots$& $\vdots$& $\cdots$& $\vdots$& $\vdots$& $\cdots$&$\vdots$ & $\vdots$& $\vdots$ &$\vdots$\\
$c$& $\vdots$& $\vdots$&$\vdots$&$b$ & $\vdots$& $\cdots$& $\vdots$& $\vdots$& $\cdots$& $\vdots$& $\vdots$& $\cdots$&$\vdots$ & $\vdots$& $b$ &$\vdots$\\
\multicolumn{5}{c}{}& \multicolumn{3}{c}{\upbracefill}& \multicolumn{3}{c}{\upbracefill}& \multicolumn{3}{c}{\upbracefill} & \multicolumn{3}{c}{}\\
\multicolumn{5}{c}{}& \multicolumn{3}{c}{$k$ agents}& \multicolumn{3}{c}{$k$ agents}& \multicolumn{3}{c}{$k$ agents} & \multicolumn{3}{c}{}\\
\end{tabular}
\end{table} 
Let $P_{-(1,2)}=(P_3,\ldots,P_{3k+5})$. For each $x\in X \setminus \{a,b,c\}$, $S(P,a)=S(P,b)>S(P,c)>S(P,x)$ and $S(P'_2,P_{-2},b)=S(P'_2,P_{-2},c)>S(P'_2,P_{-2},a)>S(P'_2,P_{-2},x)$. Hence, $f(P)\in\{a,b\}$ and $f(P'_2,P_{-2})\in\{b,c\}$.
\begin{enumerate}
\item Assume that $f(P)=b$. Then, $P_{-5}$ is worst for $(P_5,f(P_5,\cdot))$. For each $x\in X \setminus \{a\}$, $S(P'_5,P_{-5},a)>S(P'_5,P_{-5},x)$ and $f(P'_5,P_{-5})=a\mathrel {P_5}b=f(P)$.
\item Assume that $f(P)=a$ and $f(P'_2,P_{-2})=b$. Then, $P_{-2}$ is worst for $(P'_2,f(P'_2,\cdot))$. By assumption, $f(P)=a \mathrel {P'_2}b= f(P'_2,P_{-2})$.
\item Assume that $f(P)=a$ and $f(P'_2,P_{-2})=c$. Then, $(P'_{2},P_{-(1,2)})$ is worst for $(P_1,f(P_1,\cdot))$. For each $x\in X \setminus \{b\}$, $S(P'_1,P'_{2},P_{-(1,2)},b)>S(P'_1,P'_{2},P_{-(1,2)},x)$ and $f(P'_1,P'_{2},P_{-(1,2)})=b \mathrel {P_1}c= f(P'_2,P_{-2})$.
\end{enumerate}
Therefore, $f$ is not WCSP.\\\\
Considering the three cases, we conclude that if $n\ge3$, then no plurality rule $f$ is WCSP.
\hfill \qed

\subsection{Proof of Theorem \ref{thm:borda}\label{app:borda}}
We shall show that no Borda rule is WCSP. Let $f : \mathcal{P}^n \to X$ be a Borda rule. Take any distinct $a,b,c\in X$. There are six cases to consider.\\\\
\textbf{Case 1. $\bm{m=3}$ and $\bm{n=2}$:}\\
Let $P,P^\star,P',P''\in \mathcal{P}^2$ be given by the following table:\footref{fot:table}
\begin{table}[H]
\centering
\begin{tabular}{cc||cc||cc||cc}
$P_1$&$P_2$&$P^\star_1$&$P^\star_2$&$P'_1$&$P'_2$&$P''_1$&$P''_2$\\\hline
$a$&$c$&$c$&$b$&$b$&$b$&$a$&$a$\\
$b$&$b$&$a$&$a$&$a$&$c$&$c$&$b$\\
$c$&$a$&$b$&$c$&$c$&$a$&$b$&$c$\\
\end{tabular}
\end{table}
Then, $S(P,a)=S(P,b)=S(P,c)$, $S(P^\star,a)=S(P^\star,b)=S(P^\star,c)$, and $S(P_1,P^\star_2,a)=S(P_1,P^\star_2,b)>S(P_1,P^\star_2,c)$. Hence, $f(P)\in\{a,b,c\}$, $f(P^\star)\in\{a,b,c\}$, and $f(P_1,P^\star_2)\in\{a,b\}$. Since $f$ is a Borda rule, for each $\tilde P_2\in\mathcal{P}\setminus\{P_2\}$ and each $\tilde P^\star_1\in\mathcal{P}\setminus\{P^\star_1\}$, $f(P_1,\tilde P_2)\neq c$ and $f(\tilde P^\star_1,P^\star_2)\neq c$.
\begin{enumerate}
\item If $f(P)=c$, then $P_{2}$ is worst for $(P_1,f(P_1,\cdot))$. Then, $S(P'_1,P_2,b)>S(P'_1,P_2,c)>S(P'_1,P_2,a)$ and $f(P'_1,P_2)=b\mathrel {P_1}c=f(P)$.
\item If $f(P)=a$, then $P_1$ is worst for $(P_2,f(P_2,\cdot))$ and $f(P_1,P'_2)=b\mathrel {P_2}a=f(P)$.
\item If $f(P^\star)=b$, then, $P^\star_2$ is worst for $(P^\star_1,f(P^\star_1,\cdot))$ and $f(P''_1,P^\star_2)=a\mathrel {P^\star_1}b=f(P^\star)$.
\item If $f(P^\star)=c$, then, $P^\star_1$ is worst for $(P^\star_2,f(P^\star_2,\cdot))$ and $f(P^\star_1,P''_2)=a \mathrel {P^\star_2}c=f(P^\star)$.
\item If $f(P)=b$, $f(P^\star)=a$ and $f(P_1,P^\star_2)=b$, then $P^\star_2$ is worst for $(P_1,f(P_1,\cdot))$. Also, $f(P''_1,P^\star_2)=a \mathrel {P_1}b= f(P_1,P^\star_2)$.
\item If $f(P)=b$, $f(P^\star)=a$ and $f(P_1,P^\star_2)=a$, then $P_1$ is worst for $(P^\star_2,f(P^\star_2,\cdot))$. Also, $f(P_1,P'_2)=b \mathrel {P^\star_2}a= f(P_1,P^\star_2)$.
\end{enumerate}
Therefore, $f$ is not WCSP.\\\\
\textbf{Case 2. $\bm{m\geq4}$ and $\bm{n=2}$:}\\
Take any $d \in X$ with $d\neq a,b,c$. Let $P,P'\in \mathcal{P}^2$ and $P'_1\in \mathcal{P}$ be given by the following table:\footref{fot:table}
\begin{table}[H]
\centering
\begin{tabular}{cc||cc||c}
$P_1$ &$P_2$ &$P'_1$ &$P'_2$&$P''_1$ \\ \hline
$a$&$d$ &$b$ &$c$&$b$\\
$b$&$c$ &$a$ &$d$&$a$\\
\cellcolor{lightgray}$\vdots$& \cellcolor{gray}$\vdots$ & \cellcolor{lightgray}$\vdots$ &\cellcolor{gray}$\vdots$& \cellcolor{lightgray}$\vdots$\\
$c$&$b$ &$c$ &$b$&$d$\\
$d$& $a$ & $d$ &$a$& $c$ \\
\end{tabular}
\end{table}
Then, for each $x,y\in X$, $S(P,x)=S(P,y)$. Hence, $f(P)\in X$. Since $f$ is a Borda rule, for each $\tilde P_2\in\mathcal{P}\setminus\{P_2\}$, $f(P_1,\tilde P_2)\neq d$.
\begin{enumerate}
\item Assume that $f(P)=d$. Then, $P_2$ is worst for $(P_1,f(P_1,\cdot))$ and $f(P'_1,P_2)=b\mathrel {P_1}d=f(P)$.
\item Assume that $f(P)\neq d$. Since $f(P_1,P'_2)=c=t_{m-1}(P_1)$, $P'_2$ is worst for $(P_1,f(P_1,\cdot))$. Then, $f(P''_1,P'_2)=b \mathrel {P_1}c=f(P_1,P'_2)$.
\end{enumerate}
Therefore, $f$ is not WCSP.\\\\
\textbf{Case 3. $\bm{m=3}$ and $\bm{n=2k+3}$ with $\bm {k\in \mathbb{Z}_{\geq 0}}$:}\\
Let $P \in \mathcal{P}^{2k+3}$ and $P'_1,P'_2,P'_3\in \mathcal{P}$ be given by the following table:\footref{fot:table}
\begin{table}[H]
\centering
\begin{tabular}{ccc|ccc|ccc||ccc}
$P_1$ &$P_2$&$P_3$&$P_4$&$\cdots$&$P_{k+3}$& $P_{k+4}$& $\cdots$&$P_{2k+3}$ & $P'_1$ & $P'_2$&$P'_3$\\ \hline
$a$& $b$&$c$&$c$&$\cdots$&$c$ & $a$& $\cdots$&$a$ & $b$& $c$&$a$\\
$b$&$c$&$a$&$b$&$\cdots$&$b$& $b$& $\cdots$&$b$ & $a$& $b$&$c$\\
$c$& $a$&$b$& $a$& $\cdots$& $a$ & $c$ & $\cdots$&$c$ & $c$& $a$&$b$\\
\multicolumn{3}{c}{}& \multicolumn{3}{c}{\upbracefill} & \multicolumn{3}{c}{\upbracefill} & \multicolumn{3}{c}{}\\
\multicolumn{3}{c}{}& \multicolumn{3}{c}{$k$ agents} & \multicolumn{3}{c}{$k$ agents} & \multicolumn{3}{c}{}\\
\end{tabular}
\end{table}
Then, $S(P,a)=S(P,b)=S(P,c)$. Hence, $f(P)\in\{a,b,c\}$.
\begin{enumerate}
\item If $f(P)=a$, then $P_{-2}$ is worst for $(P_2,f(P_2,\cdot))$ and $f(P'_2,P_{-2})=c \mathrel{P_2} a=f(P)$.
\item If $f(P)=b$, then $P_{-3}$ is worst for $(P_3,f(P_3,\cdot))$ and $f(P'_3,P_{-3})=a \mathrel{P_3} b=f(P)$.
\item If $f(P)=c$, then $P_{-1}$ is worst for $(P_1,f(P_1,\cdot))$ and $f(P'_1,P_{-1})=b \mathrel{P_1} c=f(P)$.
\end{enumerate}
Therefore, $f$ is not WCSP.\\\\
\textbf{Case 4. $\bm{m=3}$ and $\bm{n=2k+4}$ with $\bm {k\in \mathbb{Z}_{\geq 0}}$:}\\
Let $P \in \mathcal{P}^{2k+4}$ and $P'_1,P'_2,P'_3\in \mathcal{P}$ be given by the following table:\footref{fot:table}
\begin{table}[H]
\centering
\begin{tabular}{cccc|ccc|ccc||ccc}
$P_1$ &$P_2$ &$P_3$&$P_4$&$P_5$&$\cdots$&$P_{k+4}$& $P_{k+5}$& $\cdots$&$P_{2k+4}$ & $P'_1$ & $P'_2$ &$P'_3$\\ \hline
$a$& $a$&$b$&$c$&$c$&$\cdots$&$c$ & $a$& $\cdots$&$a$ & $b$& $c$&$c$\\
$b$&$c$&$c$&$b$&$b$&$\cdots$&$b$& $b$& $\cdots$&$b$ & $a$& $a$ &$b$\\
$c$& $b$&$a$&$a$& $a$& $\cdots$& $a$ & $c$ & $\cdots$&$c$ & $c$& $b$ &$a$\\
\multicolumn{4}{c}{} & \multicolumn{3}{c}{\upbracefill} & \multicolumn{3}{c}{\upbracefill} & \multicolumn{3}{c}{}\\
\multicolumn{4}{c}{} & \multicolumn{3}{c}{$k$ agents} & \multicolumn{3}{c}{$k$ agents} & \multicolumn{3}{c}{}\\
\end{tabular}
\end{table}
Then, $S(P,a)=S(P,b)=S(P,c)$. Hence, $f(P)\in\{a,b,c\}$.
\begin{enumerate}
\item If $f(P)=a$, then $P_{-3}$ is worst for $(P_3,f(P_3,\cdot))$ and $f(P'_3,P_{-3})=c \mathrel{P_3} a=f(P)$.
\item If $f(P)=b$, then $P_{-2}$ is worst for $(P_2,f(P_2,\cdot))$ and $f(P'_2,P_{-2})=c \mathrel{P_2} b=f(P)$.
\item If $f(P)=c$, then $P_{-1}$ is worst for $(P_1,f(P_1,\cdot))$ and $f(P'_1,P_{-1})=b \mathrel{P_1} c=f(P)$.
\end{enumerate}
Therefore, $f$ is not WCSP.\\\\
\textbf{Case 5. $\bm{m\geq4}$ and $\bm{n=2k+3}$ with $\bm {k\in \mathbb{Z}_{\geq 0}}$:}\\
Let $P \in \mathcal{P}^{2k+3}$ be given by the following table:\footref{fot:table}
\begin{table}[H]
\centering
\begin{tabular}{ccc|ccc}
$P_1$ &$\cdots$&$P_{k+1}$&$P_{k+2}$&$\cdots$&$P_{2k+3}$\\ \hline
$a$& $\cdots$&$a$&$c$&$\cdots$&$c$\\
\cellcolor{lightgray}$\vdots$&$\cdots$&\cellcolor{lightgray}$\vdots$&$b$&$\cdots$&$b$\\
$b$&$\cdots$&$b$&\cellcolor{gray}$\vdots$&$\cdots$&\cellcolor{gray}$\vdots$\\
$c$& $\cdots$&$c$& $a$& $\cdots$& $a$\\
\multicolumn{3}{c}{\upbracefill}& \multicolumn{3}{c}{\upbracefill}\\
\multicolumn{3}{c}{$k+1$ agents}& \multicolumn{3}{c}{$k+2$ agents}\\
\end{tabular}
\end{table}
Then, for each $x\in X\setminus\{c\}$, $S(P,c)>S(P,x)$ and $S(P,c)=S(P,b)+s(1)-s(2)$. Thus, $f({P}) = c$ and $P_{-1}$ is worst for $(P_1, f(P_1,\cdot))$. 
Let $P' _{1} \in \mathcal{P}$ be such that $t_1(P'_1)=b$ and $t_m(P'_1)=c$. Then, since $s(1)-s(m-1)>s(1)-s(2)$, $S(P'_1, P_{-1},b)=S(P,b)+s(1)-s(m-1)>S(P,b)+s(1)-s(2)=S(P,c)=S(P'_1, P_{-1},c)$. Then, $f(P'_1, P_{-1})\neq c$. Hence,
\[f(P'_1, P_{-1})\mathrel {P_1} f(P)=c.\]
Therefore, $f$ is not WCSP.\\\\
\textbf{Case 6. $\bm{m\geq4}$ and $\bm{n=2k+4}$ with $\bm {k\in \mathbb{Z}_{\geq 0}}$:}\\
Take any $d \in X$ with $d\neq a,b,c$. Let $P \in \mathcal{P}^{2k+4}$ be given by the following table:\footref{fot:table}
\begin{table}[H]
\centering
\begin{tabular}{ccc|ccc|c|c}
$P_1$ &$\cdots$&$P_{k+1}$ &$P_{k+2}$&$\cdots$&$P_{2k+2}$& $P_{2k+3}$& $P_{2k+4}$\\ \hline
$a$& $\cdots$&$a$ &$d$&$\cdots$&$d$& $a$& $d$\\
$b$&$\cdots$&$b$ &$c$&$\cdots$&$c$ & $c$ & $b$\\
\cellcolor{lightgray}$\vdots$&$\cdots$&\cellcolor{lightgray}$\vdots$& \cellcolor{gray}$\vdots$&$\cdots$&\cellcolor{gray}$\vdots$&\cellcolor{lightgray}$\vdots$& \cellcolor{gray}$\vdots$\\
$c$& $\cdots$& $c$ & $b$& $\cdots$& $b$ & $d$&$c$ \\
$d$& $\cdots$&$d$&$a$& $\cdots$& $a$ & $b$& $a$\\
\multicolumn{3}{c}{\upbracefill} & \multicolumn{3}{c}{\upbracefill} & \multicolumn{2}{c}{}\\
\multicolumn{3}{c}{$k+1$ agents} & \multicolumn{3}{c}{$k+1$ agents} & \multicolumn{2}{c}{}\\
\end{tabular}
\end{table}
Then, for each $x\in X \setminus \{d\}$, $S(P,d)>S(P,x)$ and $S(P,d)=S(P,c)+s(1)-s(2)$. Thus, $f({P}) = d$ and $P_{-1}$ is worst for $(P_1, f(P_1,\cdot))$. Let $P' _{1} \in \mathcal{P}$ be such that $t_1(P'_1)=c$ and $t_m(P'_1)=d$. Then, since $s(1)-s(m-1)>s(1)-s(2)$, $S(P'_1, P_{-1},c)=S(P,c)+s(1)-s(m-1)>S(P,c)+s(1)-s(2)=S(P,d)=S(P'_1, P_{-1},d)$. Then, $f(P'_1, P_{-1})\neq d$. Hence,
\[f(P'_1, P_{-1})\mathrel {P_1} f(P)=d.\]
Therefore, $f$ is not WCSP.\\\\
Considering all six cases, we have:
\begin{itemize}
\item If $n = 2$, then $f$ is not WCSP (Cases 1 and 2).
\item If $n \geq 3$ and $m = 3$, then $f$ is not WCSP (Cases 3 and 4).
\item If $n \geq 3$ and $m \geq 4$, then $f$ is not WCSP (Cases 5 and 6).
\end{itemize}
Thus, we conclude no Borda rule $f$ is WCSP.
\hfill \qed

\subsection{Proof of Theorem \ref{thm:dowdall}\label{app:dowdall}}
We shall show that no Dowdall rule is WCSP. Let $f : \mathcal{P}^n \to X$ be a Dowdall rule. Without loss of generality, assume that the score vector is given by $s(k)=\frac{1}{k}$ for each $k\in\{1,\ldots,m\}$. Take any distinct $a,b,c\in X$. There are five cases to consider.\\\\
\textbf{Case 1. $\bm{n=2}$:} \\
Let $P \in \mathcal{P}^2$ be given by the following table:\footref{fot:table}
\begin{table}[H]
\centering
\begin{tabular}{cc}
$P_1$&$P_2$\\\hline
$a$&$c$\\
$b$&$b$\\
$\vdots$&$\vdots$\\
$c$&$a$\\
\end{tabular}
\end{table}
Then, for each $x\in X\setminus\{a,c,b\}$, $S(P,a)=S(P,c)=1+\frac{1}{m}>1=S(P,b)> S(P,x)$. Thus, $f(P)\in\{a,c\}$. Let $P'\in \mathcal{P}^2$ be such that $t_1(P'_1)=b$, $t_m(P'_1)=c$, $t_1(P'_2)=b$, and $t_m(P'_2)=a$.
\begin{enumerate}
\item If $f(P)=c$, then $P_2$ is worst for $(P_1,f(P_1,\cdot))$ and $S(P'_1,P_2,b)=1+\frac{1}{2}>1+\frac{1}{m}=S(P'_1,P_2,c)$. Then, $f(P'_1,P_2)\neq c$ and $f(P'_1,P_2)\mathrel{P_1}f(P)=c$.
\item If $f(P)=a$, then $P_1$ is worst for $(P_2,f(P_2,\cdot))$ and $S(P_1,P'_2,b)=1+\frac{1}{2}>1+\frac{1}{m}=S(P_1,P'_2,a)$. Then, $f(P_1,P'_2)\neq a$ and $f(P_1,P'_2)\mathrel{P_2}f(P)=a$.
\end{enumerate}
Therefore, $f$ is not WCSP.\\\\
\textbf{Case 2. $\bm{m=3}$ and $\bm{n=3}$:} \\
Let $P,P' \in \mathcal{P}^3$ be given by the following table:\footref{fot:table}
\begin{table}[H]
\centering
\begin{tabular}{ccc||ccc}
$P_1$&$P_2$&$P_3$&$P'_1$&$P'_2$&$P'_3$\\\hline
$a$&$c$&$b$&$b$&$a$&$c$\\
$b$&$a$&$c$&$a$&$c$&$b$\\
$c$&$b$&$a$&$c$&$b$&$a$\\
\end{tabular}
\end{table}
Then, $S(P,a)=S(P,a)=S(P,c)$ and $f(P)\in\{a,b,c\}$.
\begin{enumerate}
\item If $f(P)=c$, then $P_{-1}$ is worst for $(P_1,f(P_1,\cdot))$ and $f(P'_1,P_{-1})=b\mathrel{P_1}c=f(P)$. 
\item If $f(P)=b$, then $P_{-2}$ is worst for $(P_2,f(P_2,\cdot))$ and $f(P'_2,P_{-2})=a\mathrel{P_2}b=f(P)$. 
\item If $f(P)=a$, then $P_{-3}$ is worst for $(P_3,f(P_3,\cdot))$ and $f(P'_3,P_{-3})=c\mathrel{P_3}a=f(P)$.
\end{enumerate}
Therefore, $f$ is not WCSP\\\\
\textbf{Case 3. $\bm{m\geq4}$ and $\bm{n=3}$:}\\
Take any $d \in X$ with $d\neq a,b,c$. Let $P \in \mathcal{P}^3$ be given by the following table:\footref{fot:table}
\begin{table}[H]
\centering
\begin{tabular}{ccc}
$P_1$&$P_2$&$P_3$\\\hline
$a$&$d$&$c$\\
$b$&$b$&$d$\\
$c$& $a$& $b$\\
$\vdots$&$\vdots$&$\vdots$\\
$d$&$c$&$a$\\
\end{tabular}
\end{table}
Then, for each $x\in X\setminus \{a,b,c,d\}$,
\begin{align*}
S(P,d)= s(m)+s(1)+s(2)&=1+\frac{1}{2}+\frac{1}{m},\\
S(P,a)=S(P,c)= s(1)+s(3)+s(m)&=1+\frac{1}{3}+\frac{1}{m},\\
S(P,b)= s(2)+s(2)+s(3)&=1+\frac{1}{3},\\
S(P,x)\leq s(4)+s(4)+s(4)&=\frac{3}{4}. 
\end{align*}
Then, for each $x\in X\setminus \{d\}$, $S(P,d)>S(P,x)$. Thus, $f({P}) = d$.
Hence, $P_{-1}$ is worst for $(P_1, f(P_1,\cdot))$. Let $P' _{1} \in \mathcal{P}$ be such that $t_1(P'_1)=c$ and $t_m(P'_1)=d$. Then,
\begin{align*}
S(P'_1, P_{-1},d)=S(P,d)&=1+\frac{1}{2}+\frac{1}{m}\quad \text{and}\\
S(P'_1, P_{-1},c)=S(P,c)+s(1)-s(3)&=2+\frac{1}{m}.
\end{align*}
Thus, $S(P'_1, P_{-1},c)>S(P'_1, P_{-1},d)$ and $f(P'_1, P_{-1})\neq d$. Hence,
\[f(P'_1, P_{-1})\mathrel {P_1} f(P)=d.\]
Therefore, $f$ is not WCSP.\\\\
\textbf{Case 4. $\bm{n=2k+4}$ with $\bm {k\in \mathbb{Z}_{\geq 0}}$:}\\
Let $P \in \mathcal{P}^{2k+4}$ be given by the following table:\footref{fot:table}
\begin{table}[H]
\centering
\begin{tabular}{cccc|ccc|ccc}
$P_1$&$P_2$&$P_3$&$P_4$&$P_{5}$&$\cdots$&$P_{k+4}$&$P_{k+5}$&$\cdots$&$P_{2k+4}$\\\hline
$a$&$c$&$c$&$b$&$c$&$\cdots$&$c$&$b$&$\cdots$&$b$\\
$b$&$b$&$b$&$c$&$b$&$\cdots$&$b$&$c$&$\cdots$&$c$\\
$\vdots$& $\vdots$& $\vdots$& $\vdots$& $a$& $\cdots$& $a$& $a$& $\cdots$&$a$\\
$c$&$a$&$a$&$a$&$\vdots$&$\cdots$&$\vdots$&$\vdots$&$\cdots$&$\vdots$\\
\multicolumn{4}{c}{} & \multicolumn{3}{c}{\upbracefill}& \multicolumn{3}{c}{\upbracefill}\\
\multicolumn{4}{c}{} & \multicolumn{3}{c}{$k$ agents}& \multicolumn{3}{c}{$k$ agents}\\
\end{tabular}
\end{table}
Then, for each $x\in X\setminus\{a,b,c\}$,
\begin{align*}
S(P,c)= s(m)+s(1)+s(1)+s(2)+k\times s(1)+k\times s(2)&=2+\frac{1}{2}+\frac{1}{m}+\frac{3k}{2},\\
S(P,b)= s(2)+s(2)+s(2)+s(1)+k\times s(2)+k\times s(1)&=2+\frac{1}{2}+\frac{3k}{2},\\
S(P,a)= s(1)+s(m)+s(m)+s(m)+2k\times s(3)&= 1+\frac{3}{m}+\frac{2k}{3},\\
S(P,x)\leq s(3)+s(3)+s(3)+s(3)+2k\times s(4)&=1+\frac{1}{3}+\frac{2k}{4}.
\end{align*}
Hence, for each $x\in X\setminus\{c\}$, $S(P,c)>S(P,x)$. Thus, $f({P}) = c$. Hence, $P_{-1}$ is worst for $(P_1, f(P_1,\cdot))$. Let $P' _{1} \in \mathcal{P}$ be such that $t_1(P'_1)=b$ and $t_m(P'_1)=c$. Then,
\begin{align*}
S(P'_1, P_{-1},c)=S(P,c)&=2+\frac{1}{2}+\frac{1}{m}+\frac{3k}{2}\quad\text{and}\\
S(P'_1, P_{-1},b)=S(P,b)+s(1)-s(2)&=3+\frac{3k}{2}.
\end{align*}
Thus, $S(P'_1, P_{-1},b)>S(P'_1, P_{-1},c)$. Then, $f(P'_1, P_{-1})\neq c$.
Hence, \[f(P'_1, P_{-1})\mathrel {P_1} f(P)=c.\]
Therefore, $f$ is not WCSP.\\\\
\textbf{Case 5. $\bm{n=2k+5}$ with $\bm {k\in \mathbb{Z}_{\geq 0}}$:}\\
Let $P \in \mathcal{P}^{2k+5}$ be given by the following table:\footref{fot:table}
\begin{table}[H]
\centering
\begin{tabular}{ccccc|ccc|ccc}
$P_1$&$P_2$&$P_3$ &$P_4$&$P_5$&$P_6$&$\cdots$&$P_{k+5}$&$P_{k+6}$&$\cdots$&$P_{2k+5}$\\\hline
$a$&$c$&$c$ &$b$&$b$&$c$&$\cdots$&$c$&$b$&$\cdots$&$b$\\
$b$&$a$&$a$&$c$ &$c$&$b$&$\cdots$&$b$&$c$&$\cdots$&$c$\\
$\vdots$& $\vdots$& $\vdots$ &$\vdots$& $\vdots$& $a$& $\cdots$& $a$& $a$& $\cdots$&$a$\\
$c$&$b$&$b$&$a$&$a$&$\vdots$&$\cdots$&$\vdots$&$\vdots$&$\cdots$&$\vdots$\\
\multicolumn{5}{c}{} & \multicolumn{3}{c}{\upbracefill}& \multicolumn{3}{c}{\upbracefill}\\
\multicolumn{5}{c}{} & \multicolumn{3}{c}{$k$ agents}& \multicolumn{3}{c}{$k$ agents}\\
\end{tabular}
\end{table}
Then, for each $x\in X\setminus\{a,b,c\}$,
\begin{align*}
S(P,c)= s(m)+s(1)+s(1)+s(2)+s(2)+k\times s(1)+k\times s(2)&=3+\frac{1}{m}+\frac{3k}{2},\\
S(P,b)= s(2)+s(m)+s(m)+s(1)+s(1)+k\times s(2)+k\times s(1)&=2+\frac{1}{2}+\frac{2}{m}+\frac{3k}{2},\\
S(P,a)= s(1)+s(2)+s(2)+s(m)+s(m)+2k\times s(3)&= 2+\frac{2}{m}+\frac{2k}{3},\\
S(P,x)\leq s(3)+s(3)+s(3)+s(3)+s(3)+2k\times s(4)&=1+\frac{2}{3}+\frac{2k}{4}.
\end{align*}
Therefore, for each $x\in X\setminus\{c\}$, $S(P,c)>S(P,x)$. Thus, $f({P}) = c$. Hence, $P_{-1}$ is worst for $(P_1, f(P_1,\cdot))$. Let $P' _{1} \in \mathcal{P}$ be such that $t_1(P'_1)=b$ and $t_m(P'_1)=c$. Then,
\begin{align*}
S(P'_1, P_{-1},c)=S(P,c)&=3+\frac{1}{m}+\frac{3k}{2}\quad\text{and}\\
 S(P'_1, P_{-1},b)=S(P,b)+s(1)-s(2)&=3+\frac{2}{m}+\frac{3k}{2}.
\end{align*}
Thus, $S(P'_1, P_{-1},b)>S(P'_1, P_{-1},c)$. Then, $f(P'_1, P_{-1})\neq c$. Hence,
\[f(P'_1, P_{-1})\mathrel {P_1} f(P)=c.\]
Therefore, $f$ is not WCSP.\\\\
Considering all five cases, we have:
\begin{itemize}
\item If $n=2$, then $f$ is not WCSP (Case 1).
\item If $n=3$, then $f$ is not WCSP (Cases 2 and 3).
\item If $n\ge4$, then $f$ is not WCSP (Cases 4 and 5).
\end{itemize}
Thus, we conclude no Dowdall rule $f$ is WCSP.
\hfill \qed

\subsection{Proof of Theorem \ref{thm:nonapv}\label{app:nonapv}}
We shall show that if $n\geq m$, then no scoring rule with $\succ$, other than anti-plurality rules, is WCSP. Assume that $n\geq m$. Let $f : \mathcal{P}^n \to X$ be a scoring rule with $\succ$ and not an anti-plurality rule. Then, by the definition of a scoring rule, $s(1)\geq \cdots \geq s(m)$ and $s(1)>s(m)$.
Since $f$ is not an anti-plurality rule, $s(1)>s(m-1)$. There are three cases to consider.\\\\
\textbf{Case 1. $\bm{s(1)=s(2)}$ and $\bm{n=m+k}$ with $\bm{k\in \mathbb{Z}_{\geq 0}}$:}\\
Let $P \in \mathcal{P}^{m+k}$ be given by the following table:\footref{fot:table}
\begin{table}[H]
\centering
\begin{tabular}{cccc|ccc}
$P_1$&$P_2$&$\cdots$&$P_m$&$P_{m+1}$&$\cdots$&$P_{m+k}$\\ \hline
$x_m$&$x_1$&$\cdots$&\cellcolor{lightgray}$\vdots$&$x_1$&$\cdots$&$x_1$\\
\cellcolor{lightgray}$\vdots$&$x_m$&$\cdots$&$x_2$&$x_2$&$\cdots$&$x_2$\\
$x_2$&\cellcolor{lightgray}$\vdots$&$\cdots$&$x_1$&$\vdots$ &$\cdots$&$\vdots$\\
$x_1$&$x_2$&$\cdots$&$x_m$&$\vdots$ &$\cdots$&$\vdots$\\
\multicolumn{4}{c}{}& \multicolumn{3}{c}{\upbracefill}\\
\multicolumn{4}{c}{}& \multicolumn{3}{c}{$k$ agents}\\
\end{tabular}
\end{table}
Then, for each $x\in X$, $S(P,x_1) =S(P,x_2)\geq S(P,x)$. By the definition of $\succ$, $f({P}) = x_1 $. Thus, $P_{-1}$ is worst for $(P_1, f(P_1,\cdot))$. Let $P' _{1} \in \mathcal{P}$ be such that $t_1(P'_1)=x_2$ and $t_m(P'_1)=x_1$. Since $s(1)>s(m-1)$, $S(P'_1, P_{-1},x_2)=S(P,x_2)+s(1)-s(m-1)>S(P,x_1)=S(P'_1,P_{-1},x_1)$.
Then, $f(P'_1, P_{-1})\neq x_1$. Thus,
\[f(P'_1, P_{-1})\mathrel {P_1} f(P)=x_1.\]
Therefore, $f$ is not WCSP.\\\\
\textbf{Case 2. $\bm{s(1)>s(2)}$ and $\bm{n=m+2k}$ with $\bm {k\in \mathbb{Z}_{\geq 0}}$:}\\
Let $P \in \mathcal{P}^{m+2k}$ be given by the following table:\footref{fot:table}
\begin{table}[H]
\centering
\begin{tabular}{cccc|ccc|ccc}
$P_1$&$P_2$&$\cdots$&$P_m$&$P_{m+1}$&$\cdots$&$P_{m+k}$&$P_{m+k+1}$&$\cdots$&$P_{m+2k}$\\ \hline
$x_m$&$x_1$&$\cdots$&\cellcolor{lightgray}$\vdots$&$x_2$&$\cdots$&$x_2$&$x_1$ &$\cdots$&$x_1$\\
\cellcolor{lightgray}$\vdots$&$x_m$&$\cdots$&$x_2$&$x_1$&$\cdots$&$x_1$&$x_2$ &$\cdots$&$x_2$\\
$x_2$&\cellcolor{lightgray}$\vdots$&$\cdots$&$x_1$&$\vdots$ &$\cdots$&$\vdots$ &$\vdots$&$\cdots$&$\vdots$\\
$x_1$&$x_2$&$\cdots$&$x_m$&$\vdots$ &$\cdots$&$\vdots$ &$\vdots$&$\cdots$&$\vdots$\\
\multicolumn{4}{c}{}& \multicolumn{3}{c}{\upbracefill}& \multicolumn{3}{c}{\upbracefill}\\
\multicolumn{4}{c}{}& \multicolumn{3}{c}{$k$ agents} & \multicolumn{3}{c}{$k$ agents}\\
\end{tabular}
\end{table}
Then, for each $x\in X$, $S(P,x_1) =S(P,x_2)\geq S(P,x)$. By the definition of $\succ$, $f({P}) = x_1 $. We can prove that $f$ is not WCSP in a manner similar to Case 1.\\\\
\textbf{Case 3. $\bm{s(1)>s(2)}$ and $\bm{n=m+2k+1}$ with $\bm{k\in \mathbb{Z}_{\geq 0}}$:}\\
Let $P \in \mathcal{P}^{m+2k+1}$ be given by the following table:\footref{fot:table}
\begin{table}[H]
\centering
\begin{tabular}{cccc|ccc|ccc}
$P_1$ & $P_2$ &$\cdots$&$P_m$ &$P_{m+1}$&$\cdots$&$P_{m+k}$&$P_{m+k+1}$&$\cdots$&$P_{m+2k+1}$\\\hline
$x_m$& $x_2$& $\cdots$&\cellcolor{lightgray}$\vdots$&$x_1$&$\cdots$&$x_1$&$x_2$&$\cdots$&$x_2$\\
\cellcolor{lightgray}$\vdots$&$x_m$&$\cdots$&$x_1$&$x_2$&$\cdots$&$x_2$&$x_1$&$\cdots$&$x_1$\\
$x_1$&\cellcolor{lightgray}$\vdots$&$\cdots$&$x_2$&$\vdots$&$\cdots$&$\vdots$&$\vdots$&$\cdots$&$\vdots$\\
$x_2$& $x_1$& $\cdots$& $x_m$& $\vdots$& $\cdots$& $\vdots$& $\vdots$& $\cdots$&$\vdots$\\
\multicolumn{4}{c}{}& \multicolumn{3}{c}{\upbracefill}& \multicolumn{3}{c}{\upbracefill}\\
\multicolumn{4}{c}{}& \multicolumn{3}{c}{$k$ agents}& \multicolumn{3}{c}{$k+1$ agents}\\
\end{tabular}
\end{table}
Then, for each $x\in X\setminus \{x_2\}$,
\begin{align*}
S(P,x_2) &>S(P,x_1)\geq S(P,x)\quad \text{and}\\
S(P,x_2) &=S(P,x_1)+s(1)-s(2).
\end{align*} 
Thus, $f({P}) = x_2 $ and $P_{-1}$ is worst for $(P_1, f(P_1,\cdot))$. Let $P' _{i} \in \mathcal{P}$ be such that $t_1(P'_1)=x_1$ and $t_m(P'_1)=x_2$. Since $s(1)-s(m-1)\geq s(1)-s(2)$, $S(P'_1, P_{-1},x_1)=S(P,x_1)+s(1)-s(m-1)\geq S(P,x_1) +s(1)-s(2)=S(P,x_2)=s(P'_1, P_{-1},x_2)$. By the definition of $\succ$, $f(P'_1, P_{-1})\neq x_2$. Thus,
\[f(P'_1, P_{-1})\mathrel{P_1} f(P)=x_2.\]
Therefore, $f$ is not WCSP.\\\\
Considering all three cases, we have:
\begin{itemize}
\item If $s(1)=s(2)$, then $f$ is not WCSP (Case 1).
\item If $s(1)>s(2)$, then $f$ is not WCSP (Cases 2 and 3).
\end{itemize}
Thus, we conclude if $n\geq m$, then no scoring rule $f$ with $\succ$, other than anti-plurality rules, is WCSP.
\hfill \qed
\end{document}